\theoremstyle{plain}
\newtheorem{theorem}{Theorem}[section]
\theoremstyle{definition}
\theoremstyle{remark}
\newcommand{\quotes}[1]{`#1'}
\begin{document}
\title[A semiparametric generalized exponential regression model with a principled distance-based prior]{A semiparametric generalized exponential regression model with a principled distance-based prior}

\author[1]{\fnm{Arijit} \sur{Dey}}\email{arijit.dey@duke.edu}

\author*[2]{\fnm{Arnab} \sur{Hazra}}\email{ahazra@iitk.ac.in}



\affil[1]{\orgdiv{Department of Statistical Science, Trinity College of Arts and Sciences, } \orgname{\\ Duke University}, \orgaddress{\city{Durham, NC}, \postcode{27708-0251}, \country{United States}}}

\affil*[2]{\orgdiv{Department of Mathematics and Statistics}, \orgname{Indian Institute of Technology Kanpur}, \orgaddress{\city{Kanpur}, \postcode{208016}, \country{India}}}

\abstract{The generalized exponential distribution is a well-known probability model in lifetime data analysis and several other research areas, including precipitation modeling. Despite having broad applications for independently and identically distributed observations, its uses as a generalized linear model for non-identically distributed data are limited. This paper introduces a semiparametric Bayesian generalized exponential (GE) regression model. Our proposed approach involves modeling the GE rate parameter within a generalized additive model framework. An important feature is the integration of a principled distance-based prior for the GE shape parameter; this allows the model to shrink to an exponential regression model that retains the advantages of the exponential family. We draw inferences using the Markov chain Monte Carlo algorithm and discuss some theoretical results pertaining to Bayesian asymptotics. Extensive simulations demonstrate that the proposed model outperforms simpler alternatives. The Western Ghats mountain range holds critical importance in regulating monsoon rainfall across Southern India, profoundly impacting regional agriculture. Here, we analyze daily wet-day rainfall data for the monsoon months between 1901--2022 for the Northern, Middle, and Southern Western Ghats regions. Applying the proposed model to analyze the rainfall data over 122 years provides insights into model parameters, short-term temporal patterns, and the impact of climate change. We observe a significant decreasing trend in wet-day rainfall for the Southern Western Ghats region.}

\keywords{Climate change; Generalized exponential distribution; Markov chain Monte Carlo; Penalized complexity prior; Semiparametric Bayesian regression; Western Ghats; Wet-day precipitation modeling}



\maketitle

\section{Introduction}
\label{sec:Intro}

The Western Ghats region, a prominent mountain range along the western coast of India, plays a crucial role in shaping the climatic patterns and hydrological dynamics of Southern India. Known for its exceptional biodiversity, lush forests, and vital water resources, the Western Ghats has long captured the attention of researchers and environmentalists \citep{mathew2021spatiotemporal, venkatesh2021spatio}. Among the various climatic parameters that influence this ecologically significant region, rainfall is a crucial driver of its diverse ecosystems, water availability, and overall environmental health. The Western Ghats region, characterized by its rugged terrain and proximity to the Arabian Sea, experiences a unique and intricate rainfall pattern heavily influenced by monsoon dynamics \citep{varikoden2019contrasting}. Over the last century, this region has experienced notable climatic shifts due to global climate change and local human activities \citep{abe2013impacts, Veer2022impact}. Analyzing wet-day rainfall in this region during monsoon months over an extended period of 122 years using a flexible statistical model offers a valuable opportunity to gain insights into short and long-term trends, variability, and potential shifts in the monsoonal regime. While our methodology is motivated by these goals in the context of rainfall data analysis for the Western Ghats mountain range, it is also broadly applicable to analyzing rainfall and other types of datasets, including lifetime data, financial data, etc., for other parts of the globe.

Researchers have widely employed the exponential distribution to model rainfall data \citep{todorovic1974stochastic, hazra2018bayesian} in literature; its simplicity integrates seamlessly into hydrological and climatological frameworks. However, contemporary research increasingly recognizes the need for innovative probability distribution models to encompass complex real-world data patterns better. This realization has prompted the introduction of novel probability classes with far-reaching implications across diverse research domains. \cite{ahmad2019recent} provide an excellent overview of newly developed distributions. A notable collection of models is generalized distributions, which have gained attention from both practical and theoretical statisticians for their adaptability to various datasets. Some examples include the Marshall-Olkin generalized exponential distribution \citep{marshall1997new}, generalized inverse Gaussian distribution \citep{jorgensen1982statistical}, generalized Rayleigh distribution \citep{kundu2005generalized}, etc. \citep{tahir2016compounding} examine these distributions comprehensively.

\cite{gupta1999theory} introduced another crucial generalized distribution called the generalized exponential (GE) distribution, which emerges as a specific case within the three-parameter exponentiated-Weibull model. The GE distribution has two parameters- a shape and a rate (or scale, defined as the inverse of rate) parameter. This distribution boils down to an exponential distribution when the shape parameter is one. Thus, an additional shape parameter expands the capabilities of the exponential distribution, making it more adaptable to various datasets. Since its introduction, many researchers have integrated substantial advancements in exploring different properties, estimation strategies, extensions, and applications of this distribution. For instance, \cite{gupta2001exponentiated} found the efficacy of GE distribution compared to gamma or Weibull distributions, whereas \cite{gupta2001generalized} discussed different methods of estimating the parameters of GE distribution. \cite{jaheen2004empirical}, \cite{raqab2005bayesian}, and \cite{kundu2008generalized} explored Bayesian estimation and prediction methods in this context. \cite{gupta2007generalized} reviewed the existing results and discussed some new estimation methods and their characteristics. Numerous researchers have modeled the experimental data using GE distribution across several disciplines, like meteorological studies \citep{madi2007bayesian, hazra2024minimum}; flood frequency analysis \citep{markiewicz2015generalized}; reliability analysis \citep{aslam2007economic}; lifetime analysis \citep{cota2009alternative}; risk analysis \citep{sarhan2007analysis}. However, as of our knowledge, regression-type models based on the GE distribution have never been proposed in the literature. Besides, searching using the phrase \textit{Generalized exponential regression} in Google Scholar or ChatGPT does not lead to any relevant papers. Existing works on the GE distribution focus mainly on its properties or distributional parameter estimation, but do not extend to regression models where the parameters are modeled through covariates. In particular, there is no evidence of either frequentist or Bayesian estimation methods for a parametric or semiparametric GE regression framework. 

Rainfall data collected over a century are inherently nonstationary. Here, modeling the temporal trend using traditional parametric regression would struggle to capture the intricate and evolving short-term temporal patterns. In this context, a semiparametric regression setup emerges as a promising approach. In the existing literature, many researchers have delved into applying semiparametric regression techniques for analyzing rainfall patterns \citep{wigena2015semiparametric, nguyen2020probabilistic}. While a generalized linear model (GLM) assumes the link function to be a linear combination of the covariates, the more flexible generalized additive models \citep[GAM,][]{hastie1990generalized} allow the link function to be a sum of nonlinear smooth functional forms of the underlying covariates. We generally model each smooth function in GAMs as a linear combination of basis functions like cubic B-splines. Instead of estimating the entirely unknown function, following a finite truncation (hence semiparametric) of the number of B-splines, we draw inferences based on basis function coefficients \citep{ghosal2017fundamentals}. Henceforth, instead of GAM, we use the term \textit{semiparametric regression}, which is common in Bayesian nonparametrics. The rate parameter of the GE distribution is always positive, and hence, it would be reasonable to model the log-rate in a semiparametric regression framework.

Within the Bayesian methodology, priors hold a pivotal role in inference, and the literature provides a diverse spectrum of prior distributions utilized for regression coefficients in semiparametric regression frameworks. For instance, a Gaussian prior was employed by \cite{gelfand2003bayesian}, while \cite{lee2018bayesian} opted for a Laplace prior. \cite{li2014bayesian} utilized Zellner's $g$-prior, while \cite{fahrmeir2001bayesian} considered flat priors, and \cite{koop2004bayesian} used the Normal-Gamma prior. On the other hand, the gamma distribution has consistently been considered the most natural prior choice for the shape parameter of the GE distribution; authors who introduced the GE distribution chose a gamma prior for the shape parameter in \cite{kundu2008generalized} as well. Besides \cite{raqab2005bayesian} and \cite{kim2010bayesian} also employed a gamma prior for the shape parameter. However, the literature demonstrates that a handful of alternative prior choices have also been utilized. For example, \cite{naqash2016bayesian} employed a Jeffrey's prior, indicating their preference for an objective prior, and \cite{dey2010bayesian} opted for a non-informative prior in their study.

The Penalized Complexity (PC) prior, introduced by \cite{simpson2017penalising}, has emerged in recent literature, which mitigates the model complexity through penalization. In cases where a model extends from a simpler foundational model by incorporating an additional parameter, this type of prior becomes applicable; it penalizes the escalation in model complexity that arises when favoring the extended model over its more straightforward counterpart. Existing literature encompasses instances of this approach across various models \citep{van2021principled}. \cite{ventrucci2016penalized} developed PC priors for estimating the effective degrees of freedom in Bayesian penalized splines (P-splines), while \cite{ordonez2023penalized} discussed a PC prior for the skewness parameter of the power links family, and \cite{sorbye2017penalised} proposed interpretable and comprehensive PC priors for the coefficients of a stationary autoregressive process.

In this paper, along with proposing a semiparametric Bayesian GE regression model where we build the rate parameter in a generalized additive model framework (in a log-scale), we employ the PC prior for the GE shape parameter, which allows the GE regression to shrink towards an exponential regression. Thus, the exponential distribution is considered the base model for the GE distribution. In several practical examples \citep{hazra2018bayesian, hazra2024robust}, the exponential distribution is found to be a reasonable model and enjoys several benefits of being a member of the exponential family; thus, shrinking the GE distribution to its base model through shrinking the shape parameter to one is justified. On the other hand, we opt for the independent Gaussian priors for the regression coefficients. We draw inferences using the Markov chain Monte Carlo (MCMC) algorithm; here, conjugate priors are not available for the model parameters, and thus, we update them using Metropolis-Hastings steps. We further discuss some theoretical results related to Bayesian asymptotics. We conduct a thorough simulation study by simulating 1000 datasets from each combination of the model generating and model fitting scenarios, and we compare the performances of parametric and semiparametric Bayesian GE regression models under the conventional gamma prior choices for the GE shape parameter, along with our proposed one. We study the coverage probabilities for the shape parameter and the rate functions and compare these two models using Watanabe–Akaike information criterion (WAIC)  \citep{watanabe2010asymptotic}. We implement the proposed methodology to the daily wet-day precipitation spanning from 1901 to 2022 in different regions of the Western Ghats mountain range, using the year as a covariate and wet-day precipitation as a response. We study the convergence and mixing of the MCMC chains and compare different model fits in terms of WAIC.

The paper is structured as follows. Section \ref{sec:GE} delves into the necessary background about the GE distribution, thoroughly examining its properties. In Section \ref{sec:Regression}, we introduce the GE regression model. Proceeding to Section \ref{sec:Prior specification}, we concentrate on delineating the prior specifications for the regression model, including introducing a principled distance-based prior for the shape parameter of the GE distribution. Bayesian inference under small and large sample scenarios is addressed in Section \ref{sec:Inference}. Section \ref{sec:Simulation} presents the outcomes of the simulation study, while Section \ref{sec:Data_Application} discusses an exploratory data analysis that justifies our semiparametric GE model assumption for the wet-day precipitation data, followed by the results obtained from our proposed model and some simpler alternatives. Finally, Section \ref{sec:Conclusion} summarizes our findings and contributions.


\section{Background: Generalized Exponential (GE) Distribution}
\label{sec:GE}

We say a random variable $Y$ follows a GE distribution if its cumulative distribution function (CDF) is given by
$$F(y;\alpha,\lambda) = \left(1 - e^{-\lambda y}\right)^\alpha, \hspace{10pt} y , \alpha, \lambda>0,$$
where $\alpha$ is the shape parameter and $\lambda$ is the rate parameter. The corresponding probability density function (PDF) is given by 
\begin{equation}
\label{eq:pdf_GE}
    f(y;\alpha, \lambda) = \alpha \lambda \left(1 - e^{-\lambda y}\right)^{\alpha - 1} e^{-\lambda y}, \hspace{10pt} y , \alpha, \lambda>0.
\end{equation}

The GE distribution is a more complex model than the exponential distribution, as it incorporates an extra shape parameter. Both models coincide when $\alpha$ = 1.


\subsection{Properties of GE}

The hazard function of the GE distribution is given by $$h(y;\alpha, \lambda) = \frac{f(y;\alpha, \lambda)}{1 - F(y;\alpha, \lambda)} = \frac{\alpha \lambda \left(1 - e^{-\lambda y} \right)^{\alpha-1} e^{-\lambda y}}{1 - \left(1 - e^{-\lambda y} \right)^\alpha}, \hspace{10pt} y > 0.$$ The GE distribution has an increasing or decreasing hazard rate depending on the value of the shape parameter. The hazard function decreases for $\alpha < 1$, remains constant for $\alpha = 1$, and increases for $\alpha > 1$. The moment generating function (MGF) of the GE distribution is given by $$M_Y(t) = \frac{\Gamma (\alpha+1) \Gamma(1- t/\lambda )}{\Gamma(1+\alpha-t/\lambda)}, \hspace{5pt} 0 \le t < \lambda,$$ and differentiating the log of the MGF with respect to $t$ repeatedly and then setting $t = 0$, we get the expectation, variance, and skewness of GE distribution as
\begin{eqnarray}
\nonumber   \textrm{E}(Y) &=& \lambda^{-1} \left[ \psi(\alpha + 1) - \psi(1)\right], \\
\nonumber   \textrm{V}(Y) &=& \lambda^{-2}\left[ \psi^{(1)}(1) - \psi^{(1)}(\alpha + 1) \right], \\
\nonumber  \text{Skewness}(Y) &=& \left[ \psi^{(2)}(\alpha + 1) - \psi^{(2)}(1) \right] \bigg/ \left[\psi^{(1)}(1) - \psi^{(1)}(\alpha + 1) \right]^{3/2}, \nonumber
\end{eqnarray}
where $\psi^{(m)}(z) = \frac{\partial^m}{\partial z^m}\psi(z) = \frac{\partial ^{m+1}}{\partial z^{m+1}}\ln \Gamma(z)$ is the polygamma function of order $m$; for $m = 0$, it  denotes the digamma function. 

\begin{figure}[t]
    \centering
    \includegraphics[width=0.48\textwidth]{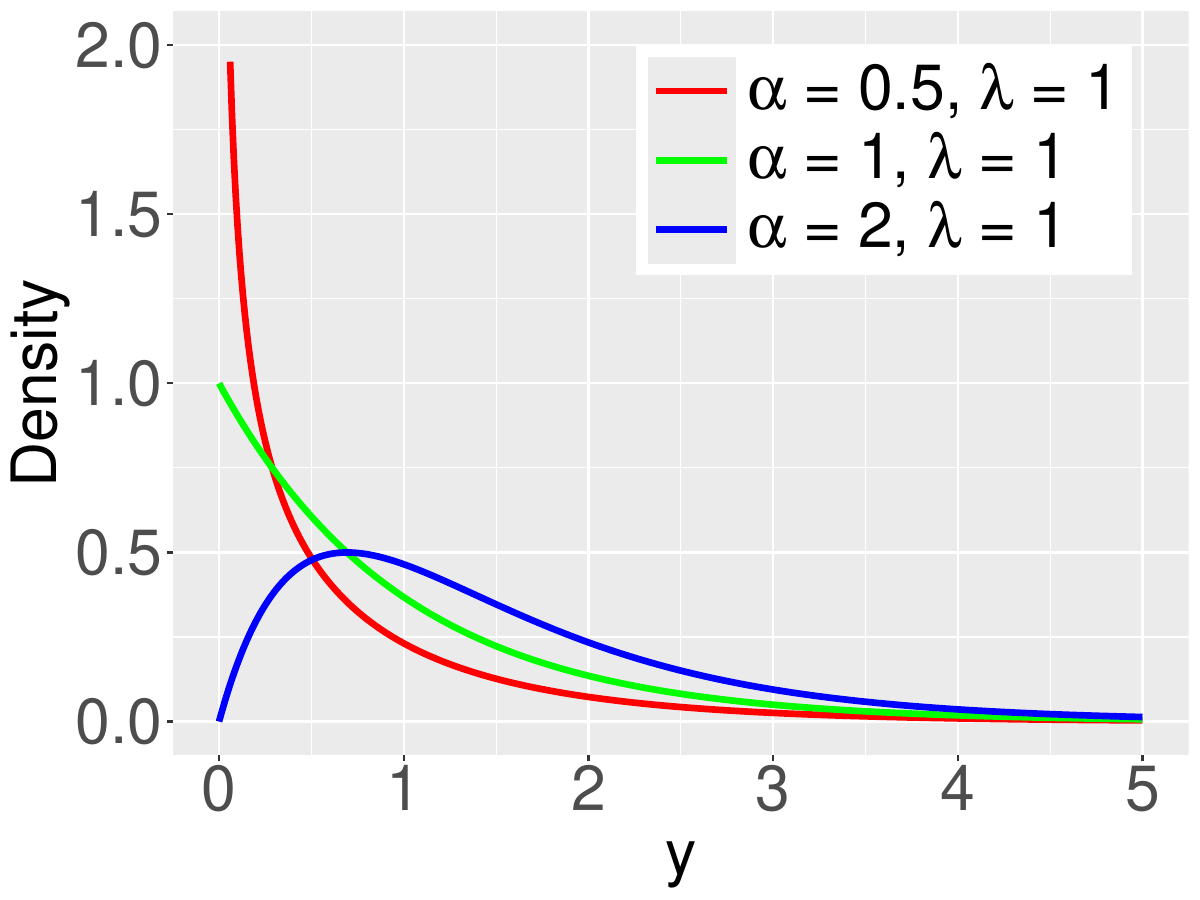}
    \includegraphics[width=0.48\textwidth]{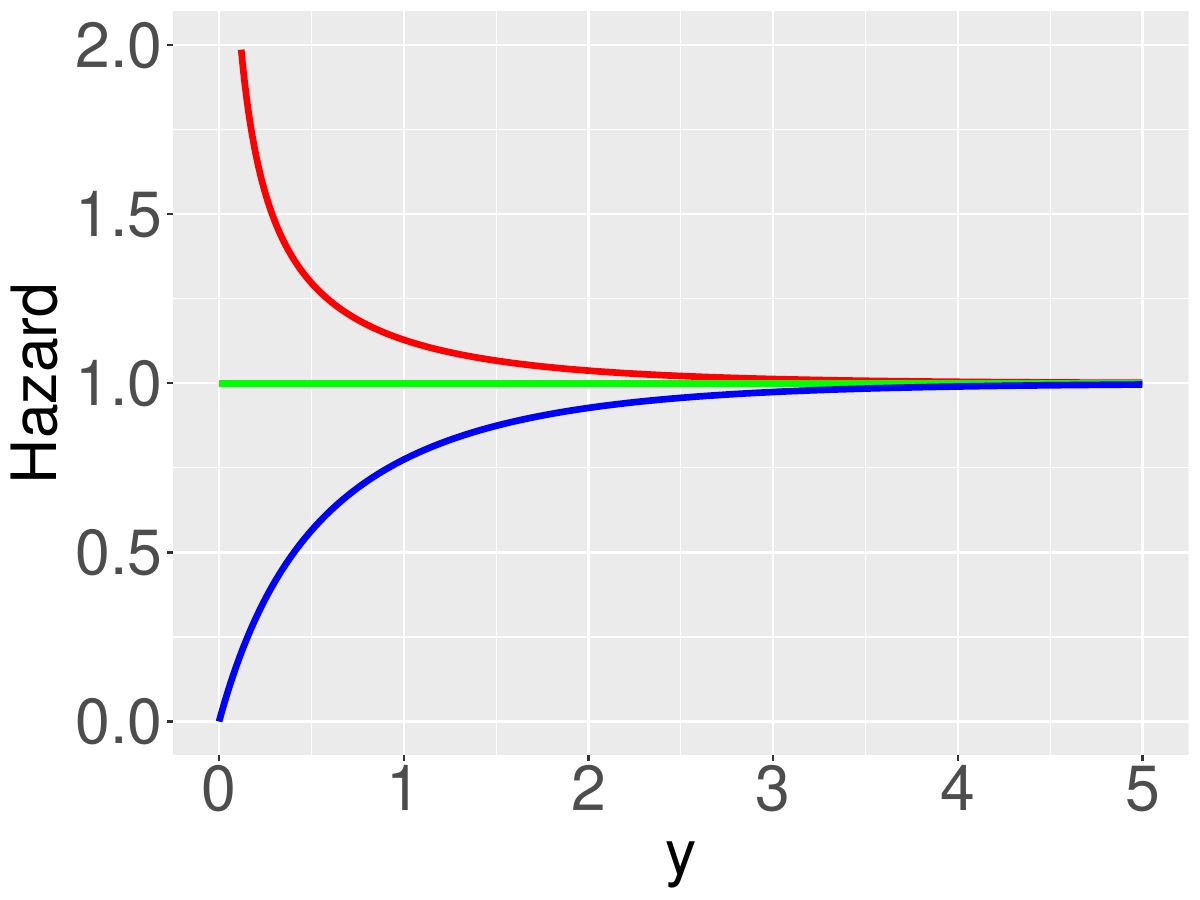}
    \includegraphics[width=0.32\textwidth]{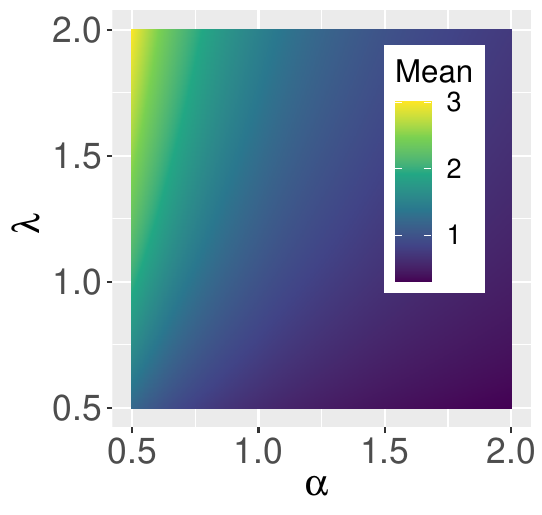}
    \includegraphics[width=0.32\textwidth]{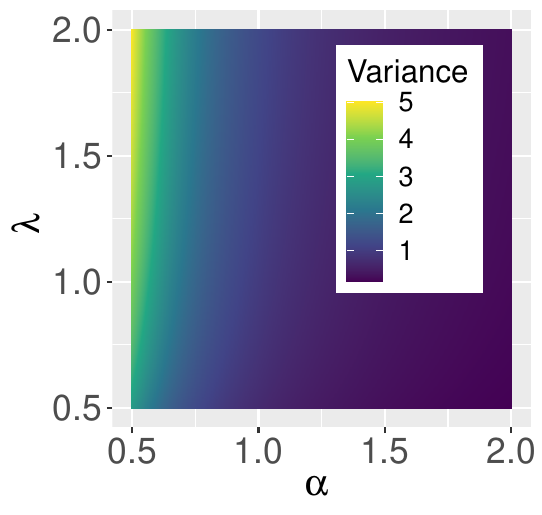}
    \includegraphics[width=0.32\textwidth]{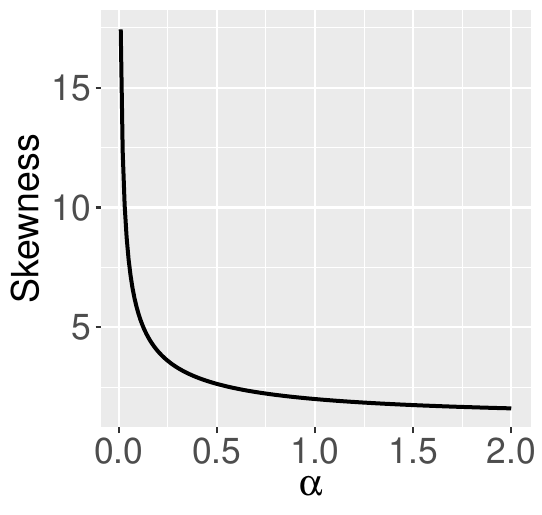}
    \caption{Generalized Exponential probability density function (top-left), hazard function (top-right), mean (bottom-left), variance (bottom-middle), and skewness (bottom-right) functions. Both top panels share the same legend.}
    \label{fig:GE_graphs}
\end{figure}

Figure \ref{fig:GE_graphs} sheds light on different aspects of the GE distribution, e.g., PDF, hazard function, mean, variance, and skewness. The top-left panel of Figure \ref{fig:GE_graphs} shows that for $\alpha < 1$, the curve depicting the PDF of the GE distribution has an asymptote at the Y-axis and then decreases exponentially and monotonically as we move across the positive real line. With $\alpha = 1$, GE coincides with the exponential distribution, thus having the mode at zero (with value $= \lambda$) and gradually decreasing similarly as the previous case. When $\alpha > 1$, the curve initiates at zero, then increases over a range of values, and eventually decreases monotonically, having a unique mode at $\log(\alpha)/\lambda$. As mentioned earlier, the top-right panel of Figure \ref{fig:GE_graphs} shows that the hazard function is monotonically decreasing when $\alpha < 1$, monotonically increasing when $\alpha > 1$, and constant (the value being $\lambda =1$) when $\alpha = 1$. The mean and variance of GE behave somewhat similarly. From the bottom-left and the bottom-middle panel of Figure \ref{fig:GE_graphs}, we see that for a fixed value of $\alpha$, both mean and variance decrease with increasing $\lambda$, and for a fixed value of $\lambda$, both increase as $\alpha$ increases. On the other hand, the skewness of the GE distribution depends only on the shape parameter and decreases exponentially with increasing $\alpha$ (bottom-right panel of Figure \ref{fig:GE_graphs}).

\section{Generalized Exponential (GE) Regression}
\label{sec:Regression}

\subsection{Parametric GE Regression}
\label{subsec:theo_regression_parametric}
The observations often do not satisfy the assumption of being independently and identically distributed. For example, the rainfall data observed across 122 years in the Western Ghats mountain range are unlikely to be identically distributed due to several potential short-term and long-term factors, such as El Niño and global warming. However, different rainfall events across the years can be safely assumed to be independent, which is a common assumption in the environmental statistics literature. We thus can consider a regression model where the response variable $Y$ follows a GE distribution, and the relationship between $Y$ and the covariates $\bm{X}= (X_1, \ldots, X_P)'$ is represented through a linear predictor $\eta$, i.e., a linear combination of the covariates with associated regression coefficients, given by 
\begin{equation}
    \label{eq:linear_predictor}
    \eta(\bm{X}) = \phi_0 + \phi_1 X_1 + \phi_2 X_2 + \dots + \phi_P X_P,
\end{equation}
where $\bm{\phi} = (\phi_0, \phi_1, \ldots, \phi_p)'$ is the vector of regression coefficients.

Here, the shape parameter $\alpha$ is considered an inherent property of the distribution that characterizes the shape and asymmetry of the distribution, allowing for a more flexible modeling approach compared to a standard linear regression with a Gaussian error component. On the other hand, the rate parameter $\lambda$ is the parameter of interest in the regression model, which captures the association between the covariates and the response variable. Moreover, given that the rate parameter of the GE distribution is positive, we relate it to the linear predictor from \eqref{eq:linear_predictor} using a link function designed to ensure the rate parameter always stays positive. Thus, for $n$ observed responses and the corresponding covariate vectors $\{Y_i, \bm{X}_i; i=1, \ldots, n\}$, we conceptualize the GE regression model as 
$$Y_i | \bm{X}_i = \bm{x}_i \overset{\textrm{Indep}}{\sim} \textrm{GE}(\alpha, \lambda_i),~~ i=1, \ldots, n,$$
where $g(\lambda_i) =\eta(\bm{x}_i)$, with $g(\cdot)$ representing an appropriate link function and $\bm{x}_i = (x_{i1}, x_{i2}, \dots, x_{iP})'$. A natural choice for $g(\cdot)$ is $g(\lambda) = \log(\lambda)$, which ensures $\lambda_i > 0$ for all $i$.


\subsection{Semiparametric GE Regression}
\label{subsec:theo_regression}

The parametric GE regression model introduced in Section \ref{subsec:theo_regression_parametric} aims to capture the linear relationship between the response (suitably transformed) and the covariates. However, such an assumption may be practically unreliable; for example, assuming the mean or median rainfall to vary linearly across 122 years for the Western Ghats range is unlikely, and several short-term decadal patterns can be present in the data. In contrast, a nonparametric regression model assumes no specific form for the relationship between the response and covariates, allowing flexibility based on data-derived information. While this approach allows for more flexible modeling, it is computationally intensive, less interpretable, and can be affected by the \textit{curse of dimensionality}. Semiparametric regression integrates the above two approaches, allowing us to have the best of both regimes; it incorporates the interpretability of the parametric setup and the flexibility of the nonparametric setup. 

In linear models or generalized linear models (GLM) that fall under the parametric setup, we assume the conditional mean of the distribution of the response variable is linked with the linear predictor through a linear combination of the covariates or their functions. The generalized additive model \citep{hastie1990generalized} setup extends this domain of regression models by introducing the nonparametric component in the linear predictors. In this setup, the most general formulation of the linear predictor can be given as 
\begin{equation}
    \label{eq:Semi-parametric_model}
    \eta(\bm{x}) = \sum_{p=1}^P f_p(x_p),
\end{equation}
where $f_j$'s are smoothing functions of continuous covariates and $\bm{x} = (x_1, x_2, \dots, x_P)'$. Under a purely nonparametric scenario, each of $f_j$'s allows an infinite basis function expansion, while most semiparametric methods assume they can be expressed as a linear combination of finite basis functions, often denoted as
\begin{equation}
    \label{eq:Basis splines}
    f_p(z) = \sum_{k = 1}^{K_p} \beta_{p,k} B_{p,k}(z), \hspace{8pt}p = 1, 2, \dots, P,
\end{equation}
where $B_{p,k}(\cdot)$'s are known basis functions and $\beta_{p,k}$ are unknown basis function coefficients that determine the shape of the smoothing function $f_p(z)$. A basis expansion of $M$-many terms can match the true curve $f_p(\cdot)$ at any $M$ points $X_1, \dots, X_M$ in the range of covariates. Hence, increasing $M$ gives us an arbitrarily flexible model, and cross-validation or information criterion-based choice of $M$ is necessary for practical purposes.

In this study, we employ a semiparametric model akin to \eqref{eq:Semi-parametric_model} for the rate parameter of the GE distribution. With a covariate vector comprising $P$ components and the appropriate logarithmic link function, the regression model takes the form
\begin{equation}
\label{eq:GE_semi_par_model}
    Y_i | \bm{X}_i = \bm{x}_i \overset{\textrm{Indep}}{\sim} \textrm{GE}\big(\alpha, \lambda(\bm{x}_i)\big) \text{ with } \log \left\lbrace\lambda(\bm{x}_i)\right\rbrace = \sum_{p = 1}^{P} \sum_{k=1}^{K_p} \beta_{p,k} B_{p,k} (x_{ip}),
\end{equation}
where $B_{p,k}(\cdot)$'s are cubic B-splines and $\beta_{p,k}$'s are the spline coefficients representing the weights assigned to the corresponding spline functions. Here, a cubic B-spline is a piecewise-defined polynomial cubic function that is defined on a set of knots or control points, taking the form $B_s(x) = (x - v_s)_{+}^3$ where $v_s$'s are the fixed knots that span the range of $x$ and \quotes{+} denotes the positive part. In our model, we choose equidistant knots. Denoting $\bm{\beta}_p = (\beta_{p,1}, \ldots, \beta_{p,K_p})', p=1,\ldots,P$ and combining them, we have $\bm{\beta} = (\bm{\beta}'_1, \ldots, \bm{\beta}'_P)'$. Further, for the B-spline components, we denote $\bm{B}_p(\bm{x}_i) = [B_{p,1}(x_{i,p}), \ldots, B_{p,K_p}(x_{i,p})]',p=1,\ldots,P$ and $\bm{B}(\bm{x}_i) = [\bm{B}_1(\bm{x}_i)', \ldots, \bm{B}_P(\bm{x}_i)']'$. By an abuse of notation, for $K=\sum_{p = 1}^{P} K_p$, we henceforth denote $\bm{B}(\bm{x}_i) = (b_{i,1}, \ldots, b_{i,K})'$ and $\bm{\beta}=(\beta_1,\ldots, \beta_K)'$ for $i=1,\ldots,n$.

\section{Prior Specification}
\label{sec:Prior specification}

We first discuss the prior specification for the rate parameter of the GE distribution. If the observations are independently and identically distributed with the same rate parameter $\lambda$, we can formulate an explicit prior for $\lambda$. Alternatively, in the case of a parametric GE regression model, we can choose independent priors for the elements of $\bm{\phi}$ in \eqref{eq:linear_predictor}. In the case of a generalized additive model setup, as in \eqref{eq:Semi-parametric_model}, without any finite basis function expansion, we need to choose priors for $f_p$'s; a standard prior choice in the literature is a Gaussian process \citep{ghosal2017fundamentals}. In semiparametric Bayesian regression, independent prior distributions are explicitly specified for the spline coefficients $\bm{\beta} = (\beta_1, \beta_2, \dots, \beta_K)'$. This paper considers independent weakly-informative Gaussian priors for $\beta_k$'s; specifically $\beta_k \overset{\textrm{IID}}{\sim} \textrm{N}(0,10^2)$.

We employ a newly developed class of priors for the shape parameter of the GE distribution. In cases where a model is constructed based on a simpler base model, the chosen prior should accurately reflect the characteristics of the model considered and capture its departure from the base model. This type of prior construction is founded upon the work of \cite{simpson2017penalising}, who introduced the Penalized Complexity (PC) prior. The PC prior is a model-based prior that imposes a penalty on the deviation of the model of consideration from its simpler base version at a logarithmic constant rate. The following subsection discusses the PC prior for $\alpha$.


\subsection{Penalized Complexity (PC) prior}
\label{subsec:PC prior}

The PC prior is an informative proper prior that exhibits robustness properties of high quality and invariance under reparameterization. It aims to penalize the complexity that arises when we move from a simpler base model to a more complex one, thereby preventing overfitting and adhering to \textit{Occam’s razor principle} \citep{simpson2017penalising}. By using the PC prior, we uphold the \textit{principle of parsimony}, which suggests a preference for simpler models until sufficient evidence supports more complex alternatives.

The PC prior is established based on the statistical difference between the proposed complex model and its base model. We quantify this distance using the Kullback-Leibler divergence (KLD) \citep{kullback1951information}, which essentially measures the information loss when we substitute a complex model with PDF $f$ with its simpler version with PDF $g$. The exponential distribution is a natural choice as the appropriate base model for the GE distribution. Hence, for our purposes, we consider $f$ and $g$ as the GE and exponential densities, respectively.

For two continuous distributions with probability distribution functions $f$ and $g$ defined over the same support, KLD is defined as
\begin{equation}
    \label{eq:KLD}
    \text{KLD}(f \parallel g) = \int_{-\infty}^{\infty} f(y) \log \left( \frac{f(y)}{g(y)} \right) \, dy \,,
\end{equation} 
where we define the distance between the two models by the \quotes{unidirectional} distance function $d(f \parallel g) = \sqrt{2\text{KLD}(f \parallel g)}$ \citep{simpson2017penalising}. The absence of symmetry in KLD is not a concern within this context. Our main focus is on quantifying the additional complexity that arises from employing the intricate model rather than the other way around.

The main idea of the PC prior involves assigning priors to the distance between two models rather than directly on the model parameters, and then by employing a change-of-variables approach, one can obtain a prior distribution for the parameter of interest. In our context, while constructing the PC prior for the shape parameter $\alpha$, we take this distance as a function of $\alpha$, i.e., $d(\alpha) = \sqrt{2\text{KLD}(\alpha)} \equiv \sqrt{2\text{KLD}(f \parallel g)}$.

To incorporate the fact that the prior should have a decaying nature as a function of the distance between the two models, we take the constant rate penalization assumption and construct the PC prior by assigning an exponential prior to the distance, i.e.,  $d(\alpha) \sim \text{Exp}(\alpha_0)$ with $\alpha_0 > 0$; this gives us the PC prior for $\alpha$ as
\begin{equation}
    \label{eq:PC}
    \pi(\alpha) = \alpha_0 \exp[-\alpha_0 d(\alpha)] \left\vert \frac{\partial d(\alpha)}{\partial \alpha} \right\vert,
\end{equation}
where $\alpha_0$ is a user-defined quantity that controls the prior mass at the tail; this is a user-defined quantity that characterizes how informative we want the PC prior to be. We achieve this by imposing a condition $\Pr[d(\alpha) > U ] = \xi$, where $U$ is the upper bound of the tail-event and $\xi$ is the weight of the event \citep{simpson2017penalising}. 

\subsection{PC prior for the shape parameter of the GE distribution}
\label{subsec:PC prior GE}

The theorems in this section introduce the KLD between our complex model (the GE density $f$) and its base model (the exponential density $g$) and the PC prior of the shape parameter $\alpha$.

\vspace{0.3cm}

\begin{theorem}
\label{th:KLD_GE}
The KLD between the GE density function in \eqref{eq:pdf_GE} and the density function of the exponential distribution with rate $\lambda$ is given by $$\textrm{KLD}(\alpha) = \log (\alpha) + 1 / \alpha -1.$$
\end{theorem}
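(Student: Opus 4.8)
The plan is to compute the Kullback-Leibler divergence directly from the definition in \eqref{eq:KLD}, with $f(y) = \alpha\lambda(1-e^{-\lambda y})^{\alpha-1}e^{-\lambda y}$ the GE density from \eqref{eq:pdf_GE} and $g(y) = \lambda e^{-\lambda y}$ the exponential density with the same rate $\lambda$. Taking the log of the ratio, the $e^{-\lambda y}$ and $\lambda$ factors cancel, leaving
\begin{equation}
\nonumber
\log\frac{f(y)}{g(y)} = \log\alpha + (\alpha-1)\log(1-e^{-\lambda y}),
\end{equation}
so that $\text{KLD}(\alpha) = \log\alpha + (\alpha-1)\,\textrm{E}_f\!\left[\log(1-e^{-\lambda Y})\right]$, where the expectation is under the GE density $f$. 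The entire problem thus reduces to evaluating the single expectation $I(\alpha) := \textrm{E}_f[\log(1-e^{-\lambda Y})]$, and one expects $(\alpha-1)I(\alpha) = 1/\alpha - 1$, i.e.\ $I(\alpha) = -1/\alpha$.

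To evaluate $I(\alpha)$, I would substitute $u = (1-e^{-\lambda y})$, which is the natural move since the GE CDF is exactly $u^\alpha$. Under this change of variables, as $y$ ranges over $(0,\infty)$, $u$ ranges over $(0,1)$, and $du = \lambda e^{-\lambda y}\,dy$, so $f(y)\,dy = \alpha u^{\alpha-1}\,du$ — recognizing that $U = 1-e^{-\lambda Y}$ follows a $\text{Beta}(\alpha,1)$ distribution. Hence
\begin{equation}
\nonumber
I(\alpha) = \int_0^1 \log(u)\,\alpha u^{\alpha-1}\,du = \alpha\int_0^1 u^{\alpha-1}\log u\,du.
\end{equation}
This last integral is standard: differentiating $\int_0^1 u^{s-1}\,du = 1/s$ with respect to $s$ gives $\int_0^1 u^{s-1}\log u\,du = -1/s^2$, so $I(\alpha) = \alpha\cdot(-1/\alpha^2) = -1/\alpha$. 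Substituting back yields $\text{KLD}(\alpha) = \log\alpha + (\alpha-1)(-1/\alpha) = \log\alpha + 1/\alpha - 1$, as claimed. Note the rate $\lambda$ drops out entirely, which is reassuring since the KLD between two scale families with the same scale should be scale-invariant.

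There is no genuine obstacle here; the argument is a short computation. The one point requiring a small amount of care is the interchange of expectation with the elementary manipulations — specifically, confirming that $I(\alpha)$ is finite, i.e.\ that $\log(1-e^{-\lambda Y})$ is integrable under $f$. This is immediate from the Beta representation, since $\int_0^1 u^{\alpha-1}|\log u|\,du < \infty$ for all $\alpha > 0$ (the singularity of $\log u$ at $u=0$ is integrable against any positive power). A secondary item worth a sentence is the edge case $\alpha = 1$: there the GE reduces to the exponential, and indeed $\text{KLD}(1) = \log 1 + 1 - 1 = 0$, consistent with the formula and with the base-model interpretation. I would present the Beta-distribution substitution as the key step, state the elementary integral $\int_0^1 u^{\alpha-1}\log u\,du = -1/\alpha^2$ with its one-line justification via differentiation under the integral sign, and then assemble the result.
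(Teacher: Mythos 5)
Your proof is correct and follows essentially the same route as the paper's: split $\log\{f(y)/g(y)\}$ into $\log\alpha$ plus $(\alpha-1)\log(1-e^{-\lambda y})$ and evaluate the remaining expectation by a change of variables. The only cosmetic difference is that you substitute $u=1-e^{-\lambda y}$ (a $\mathrm{Beta}(\alpha,1)$ representation giving $\alpha\int_0^1 u^{\alpha-1}\log u\,du=-1/\alpha$), whereas the paper sets $e^{-x}=1-e^{-\lambda y}$ and evaluates the equivalent gamma-type integral $\int_0^\infty \alpha x e^{-\alpha x}\,dx$; these are related by $u=e^{-x}$ and yield identical computations.
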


\begin{theorem}
    \label{th:PC_prior}
    The PC prior, with hyperparameter $\alpha_0$, for the shape parameter $\alpha$ is  $$\pi (\alpha) = \frac{\alpha_0}{2}  \exp\left(-\alpha_0 \,  \sqrt{2\log(\alpha) + \frac{2(1-\alpha)}{\alpha}}\right)  \left(2\log(\alpha) + \frac{2(1-\alpha)}{\alpha} \right)^{-1/2}  \left| \frac{1}{\alpha} - \frac{1}{\alpha^2}\right|, ~~ \alpha, \alpha_0 > 0.$$
\end{theorem}

Proof for Theorem \ref{th:KLD_GE} is given in Appendix \ref{apndx:1} and Theorem \ref{th:PC_prior} follows directly from \eqref{eq:PC} and the expression $d(\alpha) = \sqrt{2\text{KLD}(\alpha)}$, where $\text{KLD}(\alpha)$ is specified in Theorem \ref{th:KLD_GE}. Moreover, the density includes a scaling factor that ensures $\int_{0}^{\infty}\pi(\alpha) d \alpha = 1$.

\begin{figure}[ht]
    \centering
    \includegraphics[width=0.6\textwidth]{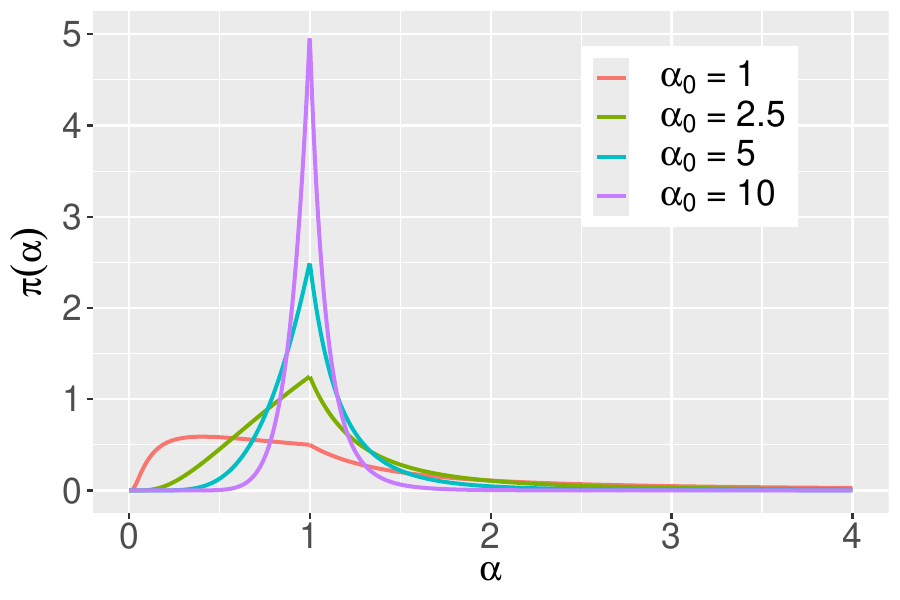}
    \caption{The PC prior for the shape parameter of the GE distribution for different choices of the hyperparameter $\alpha_0$.}
    \label{fig:PC_pdf}
\end{figure}

Figure \ref{fig:PC_pdf} illustrates $\pi(\alpha)$ for different hyperparameter specifications $\alpha_0$. We notice a proportional relationship between the value of $\alpha_0$ and the extent of contraction to the base model. As the value of $\alpha_0$ decreases, the tails become heavier, resulting in reduced contraction towards the base model. We also observe that for $\alpha_0 \leq 4/3$, the mode of the density function occurs at a value of $\alpha$ less than one, but for $\alpha_0 \geq 4/3$, the mode is at $\alpha = 1$. While one might expect the prior would have mode at $\alpha = 1$ irrespective of the value of $\alpha_0$, we do not necessarily need the mode at $\alpha$ = 1, and rather, we should rely on a prior that is consistent with the principles of PC prior.


\section{Bayesian Inference}
\label{sec:Inference}

This paper employs a Bayesian estimation method to infer and quantify the uncertainty surrounding the parameters of interest. In this context, the likelihood function based on $n$ observations from the GE distribution under the regression setting from \eqref{eq:GE_semi_par_model} is given by
\begin{equation}
    \label{eq:GE_lklhd_reg}
    L(\alpha, \bm{\beta} | \bm{y})  = \prod_{i = 1}^{n} f\big(y_i; \alpha, \lambda(\bm{x}_i)\big), 
\end{equation}
where $\bm{y} = (y_1, y_2, \dots, y_n)'$ is the observed data vector, $f(y; \alpha, \lambda)$ is the GE density function in \eqref{eq:pdf_GE} and $\lambda(\bm{x_i})$ taking form as given in \eqref{eq:GE_semi_par_model}. Also, let $\pi(\alpha)$ and $\pi(\bm{\beta})$ denote the specified mutually independent priors for the parameters $\alpha$ and $\bm{\beta}$.

Combining the priors for $\alpha$ and $\bm{\beta}$, and the likelihood function as given in \eqref{eq:GE_lklhd_reg}, we obtain the joint posterior distribution as $\pi(\alpha, \bm{\beta} | \bm{y}) \propto L(\alpha, \bm{\beta} | \bm{y}) \times \pi(\alpha) \times \pi(\bm{\beta})$ from which Bayesian inference is facilitated. However, the explicit form of the marginal posterior density of the parameters is not analytically tractable, leading to employing simulation-based techniques such as MCMC methods or numerical approximation methods like Integrated Nested Laplace Approximations (INLA), introduced by \cite{rue2009approximate}. This paper employs MCMC techniques for parameter inference, specifically utilizing the adaptive Metropolis-Hastings algorithm within Gibbs sampling. We iteratively adjust the variance of the proposal distribution within the burn-in phase so that the acceptance rate remains between 0.3 and 0.5. We initiate the MCMC chains with an initial value of 1 for $\alpha$ and the maximum likelihood estimate for $\bm{\beta}$ as calculated under $\alpha$ = 1. In our implementation, we update the model parameters one at a time within Gibbs sampling.

Furthermore, describing the asymptotic (as $n \uparrow \infty$ and keeping $K$ fixed) distribution of the posterior estimates for the parameters $\alpha$ and $\bm{\beta}$ is feasible using the Bernstein-von Mises theorem. To gauge the level of uncertainty linked with these parameter estimations, we investigate the asymptotic variance of the parameters, which is encapsulated by the inverse of the information matrix. 

\subsection{Bayesian Asymptotics}

In this subsection, we first discuss a result related to posterior consistency and further characterize the asymptotic posterior distribution for the proposed model. The following results do not assume the entire function $\lambda(\cdot)$ in \eqref{eq:GE_semi_par_model} to be unknown; rather, assuming the linear representation of $\log\{ \lambda(\cdot) \}$ in \eqref{eq:GE_semi_par_model} to be true, only focuses on the large sample results for $\alpha$ and $\bm{\beta}$. Suppose we denote the collection of the first $n$ observations by $\mathcal{Y}_n = \{Y_1, \ldots, Y_n\}$, the full vector of model parameters by $\bm{\theta} = (\alpha, \bm{\beta}')'$, the corresponding vector of true parameter values by $\bm{\theta}_{\ast} = (\alpha_*, \bm{\beta_{*}'})'$, and the maximum likelihood estimator of $\bm{\theta}$ based on $\mathcal{Y}_n$ by $\hat{\bm{\theta}}_n = \underset{\bm{\theta}}{\arg\max} \, \mathbb{P}(\mathcal{Y}_n|\bm{\theta})$. Here, the Fisher information matrix is defined as 
$$\mathcal{I}_n (\bm{s}) = \mathbb{E} \left[ \left. \left( \frac{\partial}{\partial\bm{\theta}} \log \mathbb{P} \left( \mathcal{Y}_n | \bm{\theta} \right) \right)^2 \right|_{\bm{\theta} = \bm{s}} \, \right] = - \mathbb{E} \left[ \left. \left( \frac{\partial^2}{\partial\bm{\theta}^2} \log \mathbb{P} \left( \mathcal{Y}_n | \bm{\theta} \right) \right) \right|_{\bm{\theta} = \bm{s}} \, \right],$$ 
where the expectation is calculated with respect to the data.

\begin{theorem}
\label{th:post_cons}
    The posterior distribution of $\bm{\theta} = (\alpha, \bm{\beta}')'$ is consistent, i.e., the posterior concentrates in the neighborhood of the true parameter value $\bm{\theta}_{\ast}$ as $n \uparrow \infty$.
\end{theorem}

The proof of Theorem \ref{th:post_cons} follows from Doob's theorem \citep{doob1949application} and it is provided in Appendix \ref{apndx:2}. 


\begin{theorem}
\label{th:Bayes_asymp}
Under sufficient regularity conditions \citep[Section 2.2,][]{patriota2012q}, we have
\begin{equation}
    \label{eq:Bayesian_aysmptotic}
    \sqrt{n} ( \bm{\theta} - \hat{\bm{\theta}}_n ) | \mathcal{Y}_n \xrightarrow[]{d} N_{K+1} \left(0, \widetilde{\mathcal{I}} (\bm{\theta}_\ast)^{-1} \right),
\end{equation}
where $\widetilde{\mathcal{I}} (\bm{\theta}) = \lim_{n \uparrow \infty} n^{-1} \mathcal{I}_n (\bm{\theta})$. Alternative notations for $\widetilde{\mathcal{I}} (\bm{\theta})$ and $\mathcal{I}_n (\bm{\theta})$ are $\widetilde{\mathcal{I}}(\alpha, \bm{\beta})$ and $\mathcal{I}_n(\alpha, \bm{\beta})$, respectively. The Fisher information matrix under our GE regression setup is  
\begin{equation}
    \label{eq:GE_Information_matrix_reg_exp}
    \mathcal{I}_n(\alpha, \bm{\beta}) = \begin{pmatrix}
I_{\alpha,\alpha}^{(n)} & I_{\alpha,1}^{(n)} & \cdots & I_{\alpha,K}^{(n)} \\
I_{1,\alpha}^{(n)} & I_{1,1}^{(n)} & \cdots & I_{1,K}^{(n)} \\
\vdots  & \vdots  & \ddots & \vdots  \\
I_{K,\alpha}^{(n)} & I_{K,1}^{(n)} & \cdots & I_{K,K}^{(n)} 
\end{pmatrix} = -
\begin{pmatrix}
\mathbb{E}(J_{\alpha,\alpha}^{(n)}) & \mathbb{E}(J_{\alpha,1}^{(n)}) & \cdots & \mathbb{E}(J_{\alpha,K}^{(n)}) \\
\mathbb{E}(J_{1,\alpha}^{(n)}) & \mathbb{E}(J_{1,1}^{(n)}) & \cdots & \mathbb{E}(J_{1,K}^{(n)}) \\
\vdots  & \vdots  & \ddots & \vdots  \\
\mathbb{E}(J_{K,\alpha}^{(n)}) & \mathbb{E}(J_{K,1}^{(n)}) & \cdots & \mathbb{E}(J_{K,K}^{(n)}) 
\end{pmatrix},
\end{equation}
where $J_{\alpha,\alpha}^{(n)} = \dfrac{\partial^2 l(\alpha, \bm{\beta} | \mathcal{Y}_n)}{\partial \alpha^2} $, $J_{\alpha,k}^{(n)} = J_{k, \alpha}^{(n)} = \dfrac{\partial^2 l(\alpha, \bm{\beta} | \mathcal{Y}_n)}{\partial \alpha \partial \beta_k }$, and $J_{k,k'}^{(n)} = \dfrac{\partial^2 l(\alpha, \bm{\beta} | \mathcal{Y}_n)}{\partial \beta_k \partial \beta_{k'}}; k,k' = 1, 2, \dots, K$, and $l(\alpha, \bm{\beta} | \mathcal{Y}_n) = \log[L(\alpha, \bm{\beta} | \mathcal{Y}_n)]$ is the log-likelihood under our model setup. 
\end{theorem}

The proof of Theorem \ref{th:Bayes_asymp} follows from the Bernstein-von Mises theorem and its generalized version under a linear model framework \cite{ghosal1999asymptotic}. The form of the log-likelihood, the derivatives, and the entries of $\mathcal{I}_n(\alpha, \bm{\beta})$ are derived in Appendix \ref{apndx:2}.

\section{Simulation Study}
\label{sec:Simulation}

We conduct an extensive simulation study to demonstrate (i) the effectiveness of the PC prior described in Section \ref{subsec:PC prior} over a conventional choice of a gamma prior and (ii) the effectiveness of the proposed semiparametric model, where the GE rate parameter (in log scale) is modeled as a linear or nonlinear function of the covariate(s), over a parametric model. As described in the following table, we consider eight simulation settings to address the simulation goals mentioned in (i) and (ii).

\begin{table}[ht]
\centering
\normalsize
\label{tab:sim-settings}
\begin{tabular}{l c c c}
\hline
\textbf{Setting} & \textbf{Generating data from} & \textbf{Fitted Model} & \textbf{Prior for $\alpha$} \\
\hline
Setting 1 & Linear set up     & Parametric     & Gamma \\
Setting 2 & Non-linear set up & Parametric     & Gamma \\
Setting 3 & Linear set up     & Parametric     & PC \\
Setting 4 & Non-linear set up & Parametric     & PC \\
Setting 5 & Linear set up     & Semiparametric & Gamma \\
Setting 6 & Non-linear set up & Semiparametric & Gamma \\
Setting 7 & Linear set up     & Semiparametric & PC \\
Setting 8 & Non-linear set up & Semiparametric & PC \\
\hline
\end{tabular}
\end{table}

The structure of each simulation setting is as follows. We first choose the sample size $n$ and generate the data $Y$ from a GE distribution, specifying the true values of $\alpha$ and $\lambda$. Next, we set the corresponding hyperparameters of the prior for $\alpha$ and fit either a parametric or semiparametric model using MCMC. Finally, we estimate several quantities, which are then used to assess the simulation goals. The detailed specifications are provided below.

For $n$, we consider two cases, $n$ = 24 and $n$ = 99, to gain insights into scenarios with small and large sample sizes, respectively. For $\alpha$, we choose it among one of three values, namely 0.5, 1, and 2, which allows us to investigate different scenarios; $\alpha=1$ represents the exponential distribution scenario, and in contrast, the values of $\alpha$ being 0.5 and 2 indicate deviations from the exponential-like behavior. When $\alpha=1$, the proposed PC prior would try to shrink the fitted model towards the true data-generating mechanism. However, when $\alpha=0.5$ or $\alpha=2$, the true-generating mechanism is not an exponential regression setting, and hence, the PC prior would shrink the fitted model towards a wrong model. Thus, our goal remains to judge the advantages of the PC prior when $\alpha=1$ and whether the prior provides a robust estimate when $\alpha \neq 1$. 

To compute $\lambda$, we need to specify both the covariates $(x_i)$ and the true $\beta$ values. The covariate matrix includes an intercept and two additional columns. The first column consists of an equally spaced sequence of values between 0 and 1. Specifically, for a sample size of $n=24$, the realized covariate values are $\mathcal{X} = (0.04, 0.08, \dots, 0.96)'$, while for a larger sample size of $n=99$, the values are $\mathcal{X} = (0.01, 0.02, \dots, 0.99)'$. The second column is generated as a random covariate drawn from a $\text{uniform}(0, 0.5)$ distribution. For the linear data-generating process, we use the covariate matrix as defined, with the $i$-th covariate given as $x_i^{L} = [1, x_{i1}, x_{i2}]$. In the nonlinear case, we transform the second and third columns to get $x_i^{NL} = [1, x_{i1}^2, \sin(2\pi x_{i2})]$. Further, we specify $\beta_{\text{true}} = (\beta_0, \beta_1, \beta_2)$ as (-5, 5, 3) for the linear case and (-5, 2, 3) for the nonlinear case. 

In summary, when the true data generating scheme is linear, we simulate data following $Y_i | X_i = x_i \overset{\textrm{Indep}}{\sim} \textrm{GE}\big(\alpha, \lambda_i = \exp[-5 + 5 x_{i1} + 3 x_{i2}]\big),~i=1, \ldots, n$. Alternatively, when the true data generating scheme is nonlinear, data are generated from $Y_i | X_i = x_i \overset{\textrm{Indep}}{\sim} \textrm{GE}\big(\alpha, \lambda_i = \exp[-5 + 2x_{i1}^2 + 3 \sin(2\pi x_{i2})]\big),~i=1, \ldots, n$. 
Furthermore, to ensure that the random covariates remain constant across all scenarios, we set the random seed in \texttt{R} to \texttt{set.seed(100)} prior to sampling from the uniform distribution. Moreover, the choice of $\beta_{\text{true}}$ does not substantially affect the final results.

We consider two sets of hyperparameters for each prior specification of $\alpha$. The hyperparameter is set to either 2.5 or 5 for the PC prior, while for the gamma prior, it is set to either $(0.01, 0.01)$ or $(1, 1)$. To clarify, each of the settings described above is evaluated under 12 different design specifications, arising from the combination of two sample sizes ($n$), three values of $\alpha$, and two sets of hyperparameters for the prior on $\alpha$, totaling 96 different design specifications. For $\beta_i$'s, we use a non-informative Gaussian prior with zero mean and variance 100.

Two types of models are fitted to the generated datasets: parametric and semiparametric. In the parametric case, we use the same covariate matrix as in the data-generating process and estimate the corresponding $\beta$ coefficients. In the semiparametric case, $x_{i1}$ and $x_{i2}$ are expanded into two sets of B-splines with four basis functions each, and the spline coefficients are estimated.

We employ the MCMC algorithm to draw samples from the posterior distributions. We draw 25,000 MCMC samples for $\alpha$ and $\beta$ and discard the first 12,500 samples as burn-in. Further, we thin the chains, keeping one of five consecutive samples, and draw inferences based on the remaining 2,500 MCMC samples. We monitor the convergence and mixing of the MCMC chains using trace plots and summaries like effective sample sizes, and we obtain the desired results; we do not show the MCMC chain-related diagnostics corresponding to the simulation studies for brevity. 
Furthermore, to stabilize the variance of the results, we replicate each of the 96 design specifications 1,000 times.

We now describe the analysis and plots used to address our two simulation goals. For the first goal, which evaluates the efficacy of the PC prior, we compare the use of the gamma prior against the use of the PC prior across the four combinations of data generation and model fitting scenarios. Specifically, we compare Setting 1 with 3, Setting 2 with 4, Setting 5 with 7, and Setting 6 with 8. For each setting, we compute two performance measures: coverage probability and absolute fitting bias of $\alpha$. Coverage probability is calculated by checking whether the 95\% credible interval for $\alpha$ contains the true value and then averaging over the 1,000 replications. Absolute fitting bias is defined as the absolute difference between the true $\alpha$ and its posterior mean. We also average it over the 1,000 replications.

For the second goal, where we compare the efficacy of the semiparametric fit against the parametric fit, we contrast Settings 1–4 with Settings 5–8, respectively. We evaluate performance using WAIC and absolute fitting error. The WAIC is computed based on the deviance calculated from the fitted likelihood. The absolute fitting error is defined as the absolute difference between the true and estimated values of $\log(\lambda) = \log(x_i^\top \beta)$, summed over all data points. Importantly, we do not directly compare the estimated $\beta$ values with the true values, since in the semiparametric case the use of basis splines prevents recovery of the original $\beta$’s. Finally, both WAIC and absolute fitting error are averaged over 1,000 replications.

Figure \ref{fig:Sim_PC_cov prob_PC_AB} corresponds to the first simulation goal. 
The top illustration presents the coverage probabilities of different simulation setups. Four columns represent the four combinations of data generation and model fitting scenarios: Linear-Parametric, Nonlinear-Parametric, Linear-Semiparametric, Nonlinear-Semiparametric. The two rows represent sample sizes of 24 and 99, respectively. Each panel showcases the four prior (two gamma and two PC) specifications represented by different lines to facilitate the comparison. The X-axis represents the different true values of $\alpha$.
The illustration shows that when the true value of $\alpha$ approximates one, the PC prior exhibits superior coverage probability compared to the conventional gamma prior. This pattern holds across all four configurations, except the second one, where both the gamma and PC priors yield undesirable outcomes due to attempts to fit a parametric linear model to highly nonlinear data. 

In the bottom illustration of Figure \ref{fig:Sim_PC_cov prob_PC_AB}, the identical simulation setups are depicted, but for the absolute bias in the estimation of $\alpha$. This illustration highlights a reduction in estimation bias with the PC prior when the true $\alpha$ value is one, aligning with the inherent characteristic of the PC prior, to shrink the estimate towards the base model. The trend is particularly evident under the PC(5) prior. Additionally, as the sample size increases, the influence of the prior gradually diminishes; this pattern is evident as the lines representing absolute bias nearly overlap, regardless of the chosen prior.

\begin{figure}[t]
    \centering
    \includegraphics[width=\textwidth]{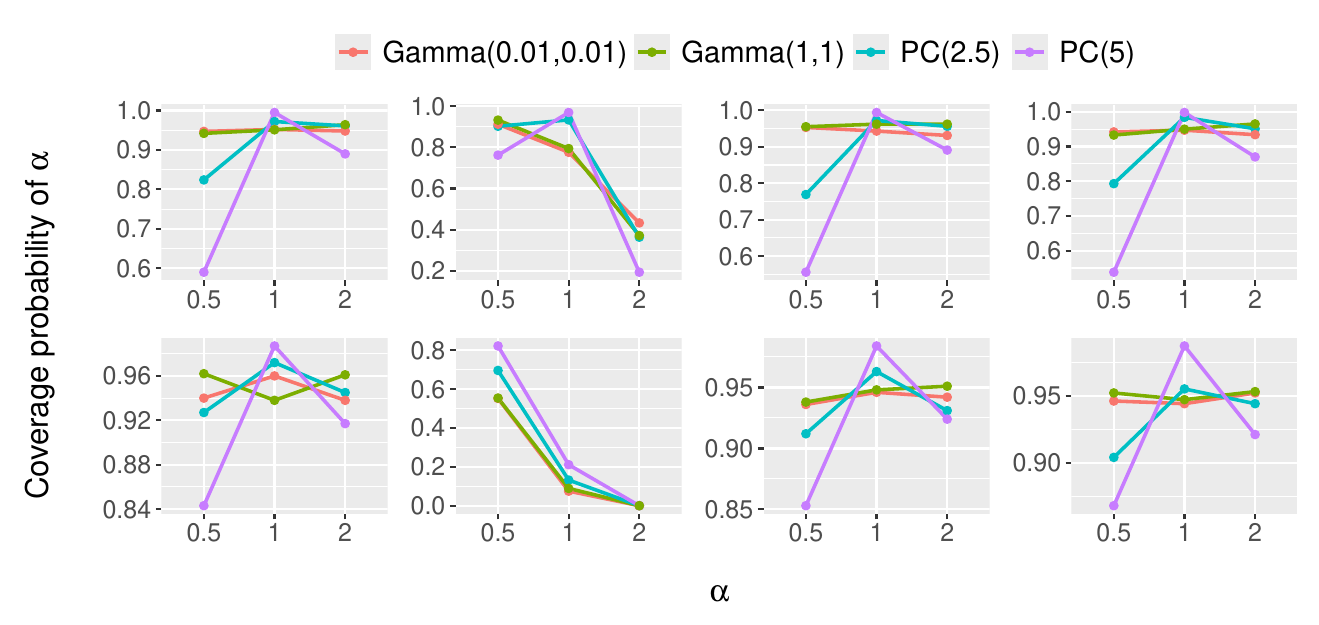}
    \includegraphics[width=\textwidth]{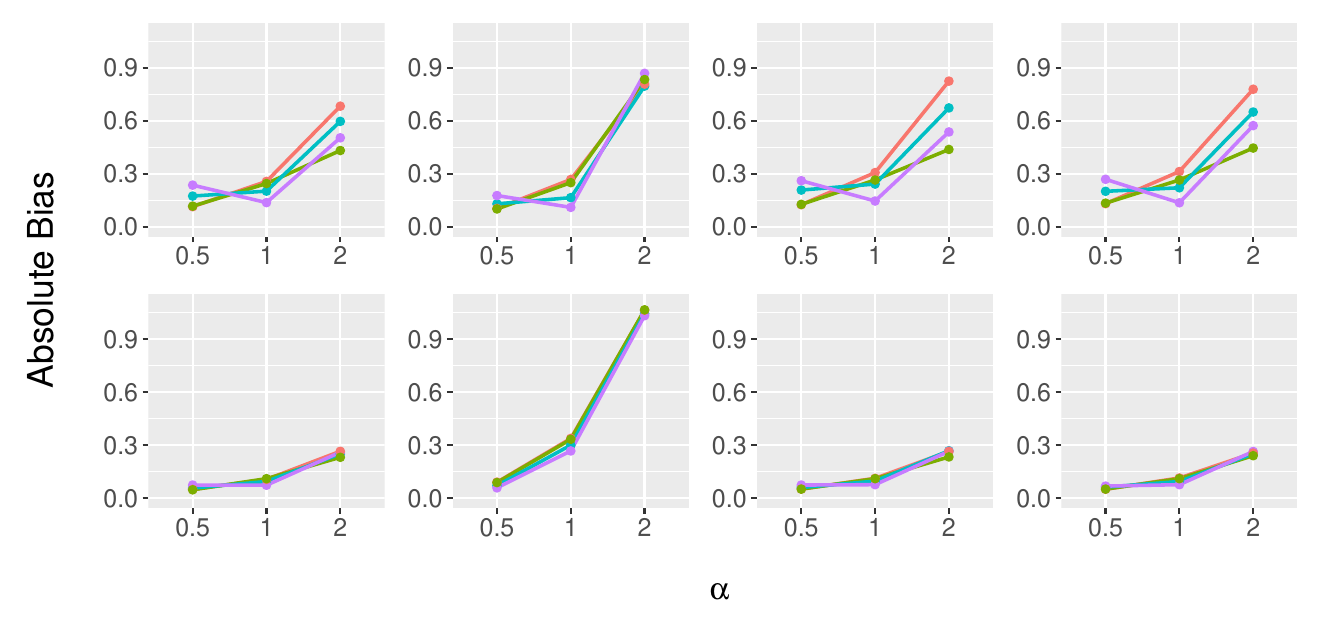}
    \vspace{-0.7cm}
    \caption{Coverage probabilities (top illustration) and absolute bias values (bottom illustration) computed based on imposing a PC prior. For each illustration, two rows depict $n=24$ and $n=99$, respectively, and four columns represent the four combinations of data generation and model fitting scenarios.}
    \vspace{-0.1cm}
    \label{fig:Sim_PC_cov prob_PC_AB}
    \end{figure}

Figure \ref{fig:Sim_semipar_WAIC_Abs error} focuses on the second simulation goal. In the top illustration, we compare the goodness of fit between the parametric and semiparametric models using the absolute fitting error. The four columns represent the four combinations of data generation and prior specifications scenarios: Linear-Gamma, Linear-PC, Nonlinear-Gamma, and Nonlinear-PC. Same as before, the two rows represent sample sizes 24 and 99, respectively. Each panel displays specifications of different values of $\alpha$ on the X-axis, and lines correspond to either the parametric setup or the semiparametric setup, with varying hyperparameter specifications. Although the plots generated under linear data do not provide conclusive evidence for the superiority of the semiparametric model, a distinct pattern emerges where the data is nonlinear. In these cases, a considerable gap appears between the parametric and semiparametric fits, with the semiparametric model consistently performing better, as reflected by its lower position in the plots.

The bottom panel of Figure \ref{fig:Sim_semipar_WAIC_Abs error} presents a similar comparison, this time based on WAIC, where smaller values indicate better out-of-sample predictive performance. The overall trend mirrors the previous findings: when the data are generated from a linear setup, the parametric model outperforms the semiparametric model. However, in the nonlinear case, the ordering of WAIC scores is reversed, with the semiparametric model consistently achieving lower values than the parametric model. This reversal highlights the clear advantage of using semiparametric modeling, particularly in highly nonlinear scenarios. Notably, the lines with the same prior but different values of hyperparameters are almost overlapping for this panel.

\begin{figure}[t]
    \centering
    \includegraphics[width=\textwidth]{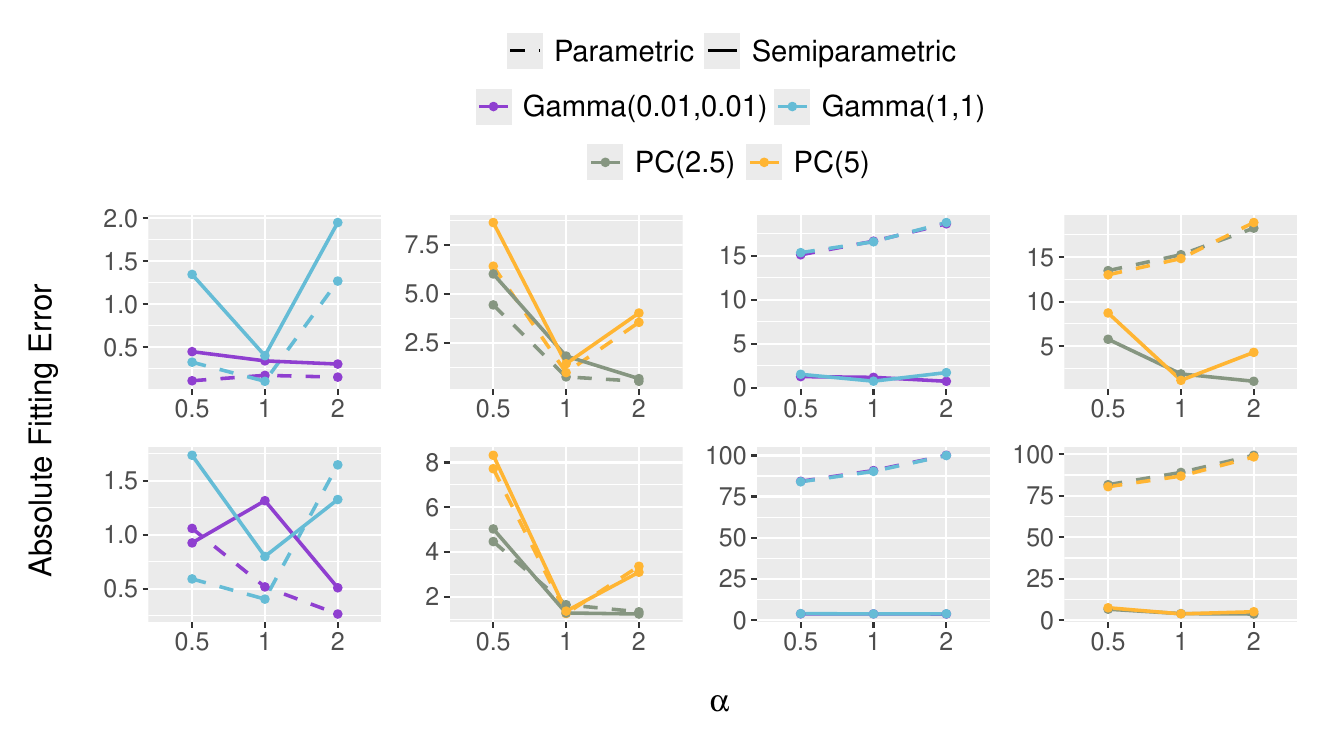}
    \includegraphics[width=\textwidth]{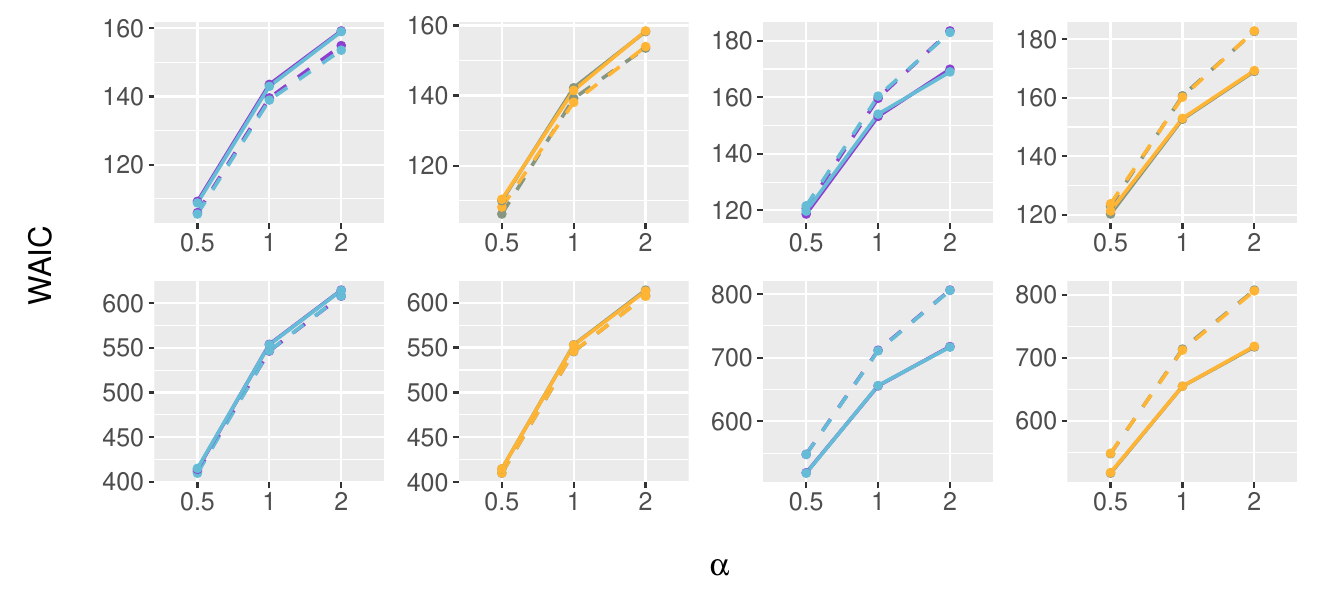}
    \vspace{-0.7cm}
    \caption{Absolute fitting error (top illustration) and WAIC values (bottom illustration) based on fitting a semiparametric GE regression model. For each illustration, two rows depict $n=24$ and $n=99$, respectively, and four columns represent the four combinations of data generation and prior specifications scenarios.}
    \vspace{-0.1cm}
    \label{fig:Sim_semipar_WAIC_Abs error}
\end{figure}

Regarding computational resources, the simulation study comprising 96 configurations with 1,000 replications and 25,000 MCMC samples per run required approximately 3.5 hours in total. The computations were carried out on a cloud system equipped with a 60-core processor and 72 GB of RAM, with parallelization applied across replications. On average, the computation time per setting (a single design specification) was about 1 minute for the parametric model and roughly 3 minutes for the semiparametric model, including all replications.


\section{Data Application}
\label{sec:Data_Application}

We obtain daily gridded rainfall (in mm) data over the Western Ghats of India region with a spatial resolution of $1.0\degree \times 1.0\degree$, covering the period from 1901--2022. The dataset is publicly available at the official website of the Indian Meteorological Department (IMD), Pune (\url{https://www.imdpune.gov.in/cmpg/Griddata/Rainfall_1_NetCDF.html}). The gridded data product was prepared by IMD through spatial interpolation of ground-station data following the procedure described in \cite{rajeevan2008analysis}. We extract the daily rainfall information for the monsoon months of June, July, August, and September (JJAS) throughout 1901--2022. Additionally, we exclude days within the JJAS months where recorded rainfall amounts were zero. Out of the pixels representing the Western Ghats area, we group them into three distinct significant regions: the Northern, Middle, and Southern regions (the regions are shown in the supplementary material). We compute the daily rainfall values for each region by calculating the average of the corresponding pixel values within that region. Afterward, we conduct a separate analysis for each of these regions.

\subsection{Exploratory Data Analysis}
\label{sec:Explore}

Given our dataset (after preprocessing) spans over a century, our initial focus involves performing necessary analyses to address any potential trends within the data. In the top panels of Figure \ref{fig:Exploratory_est_mean_hist dens}, we present a bar diagram depicting the average yearly rainfall for each year. No clear long-term linear trend is observable for any of the three regions. However, several short-term upward and downward patterns are noticeable. We use a basis spline regression approach to explore such short-term trends, which treats daily rainfall values as response variables and corresponding years as covariates. Considering the residuals from this regression, we can effectively eliminate any potential trends embedded in the data. We overlap the estimated means with the bar diagrams in the top panels of Figure \ref{fig:Exploratory_est_mean_hist dens}. Firstly, the estimated mean curve aligns well with the visualized bar diagram. Moreover, both components highlight the presence of a nonstationary rainfall pattern. This pattern, in turn, underscores the suitability of employing a semiparametric regression model, which can effectively accommodate and incorporate these nonstationary patterns within the data.

\begin{figure}[t]
    \centering
    \includegraphics[width=0.31\textwidth]{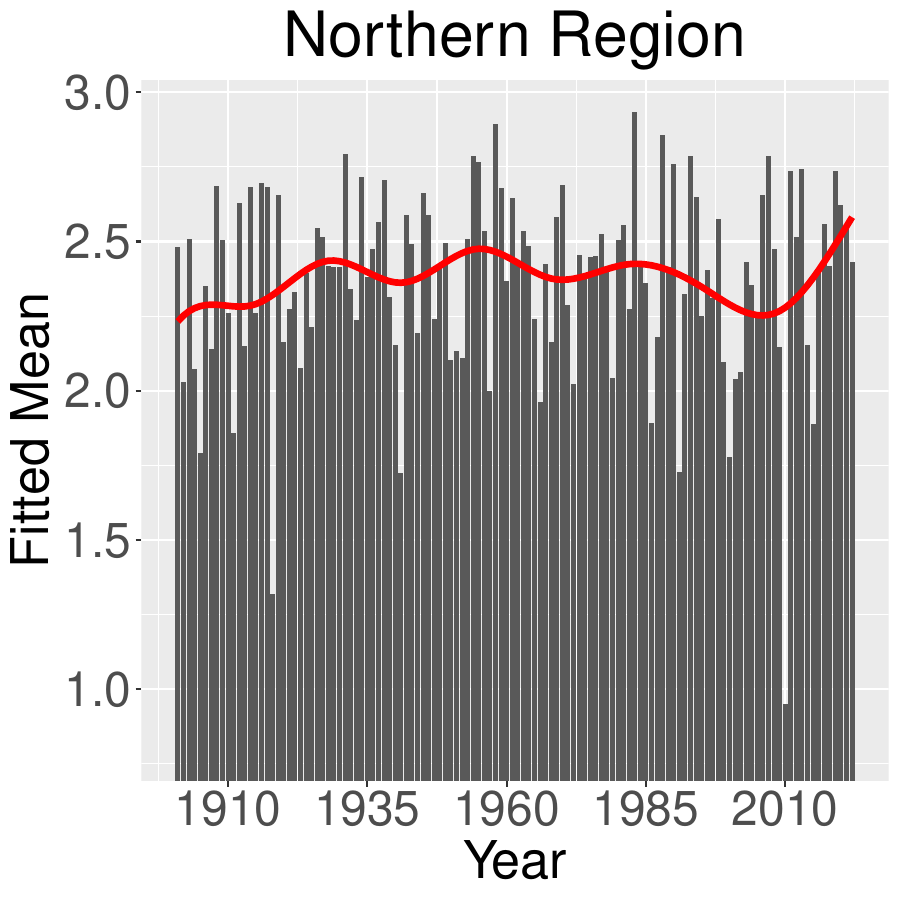}
    \includegraphics[width=0.31\textwidth]{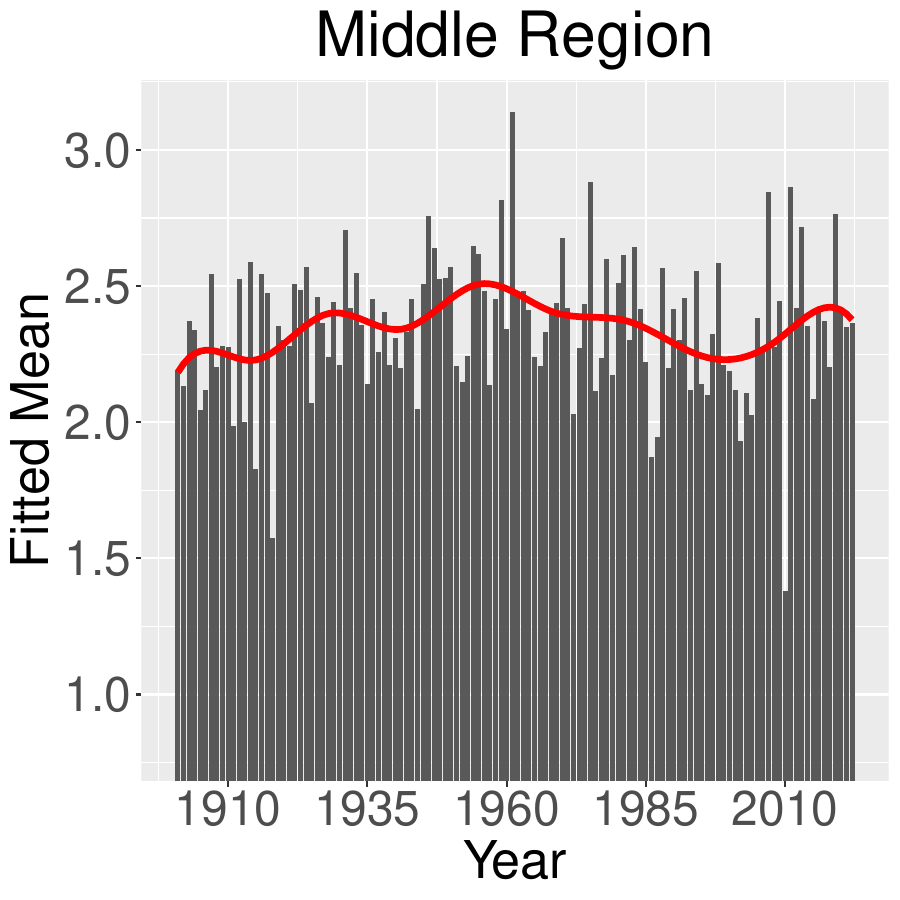}
    \includegraphics[width=0.31\textwidth]{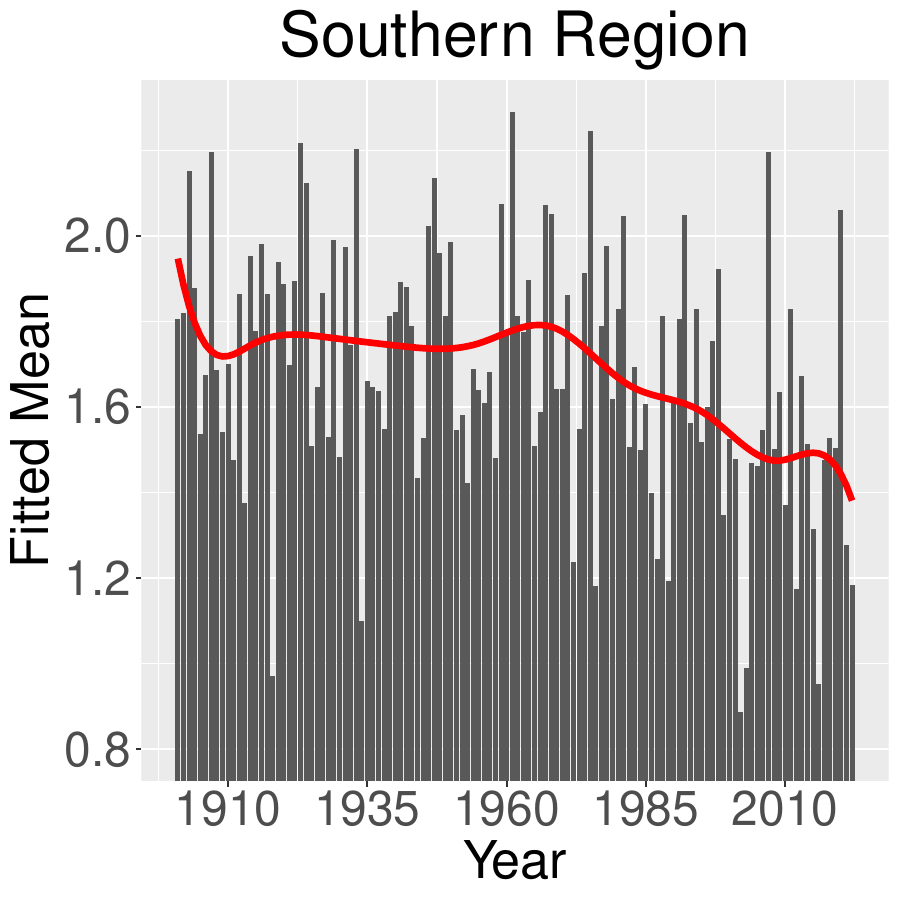} \\
    \includegraphics[width=0.31\textwidth]{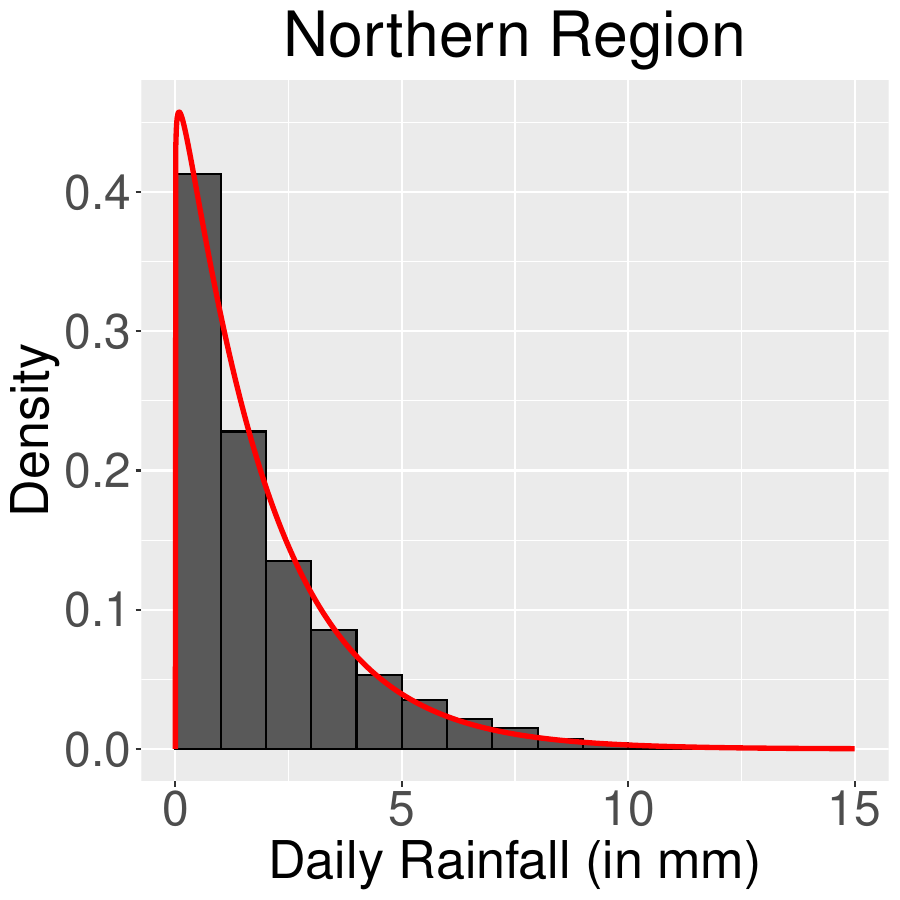}
    \includegraphics[width=0.31\textwidth]{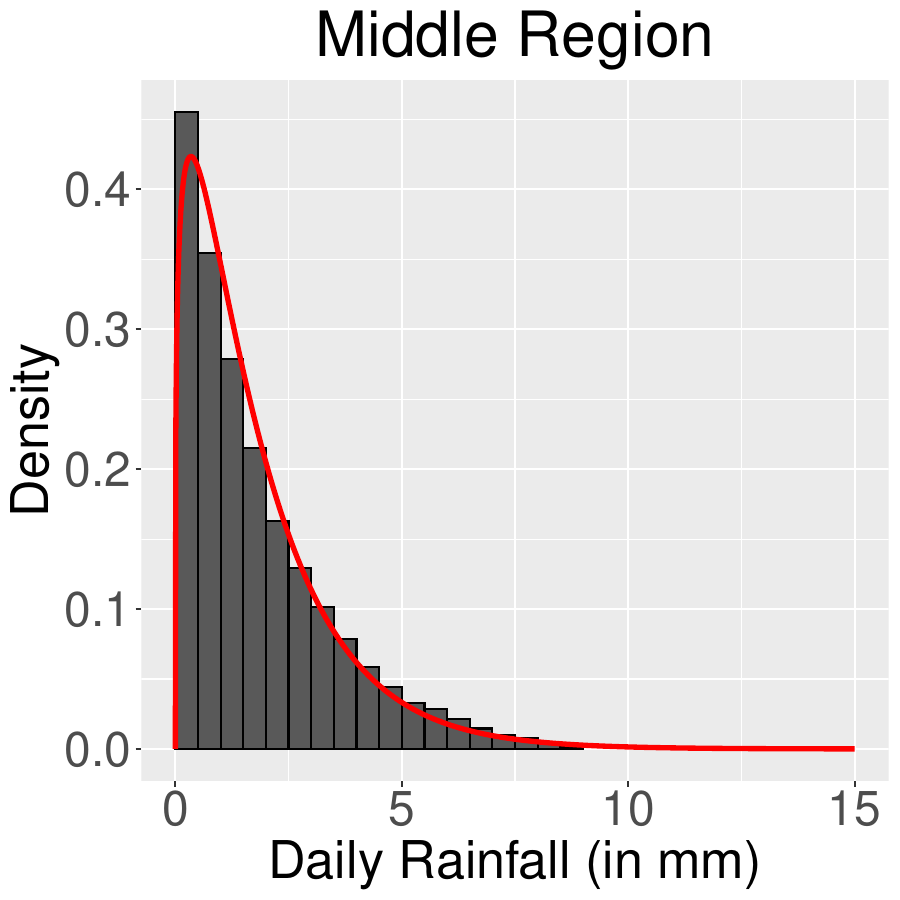}
    \includegraphics[width=0.31\textwidth]{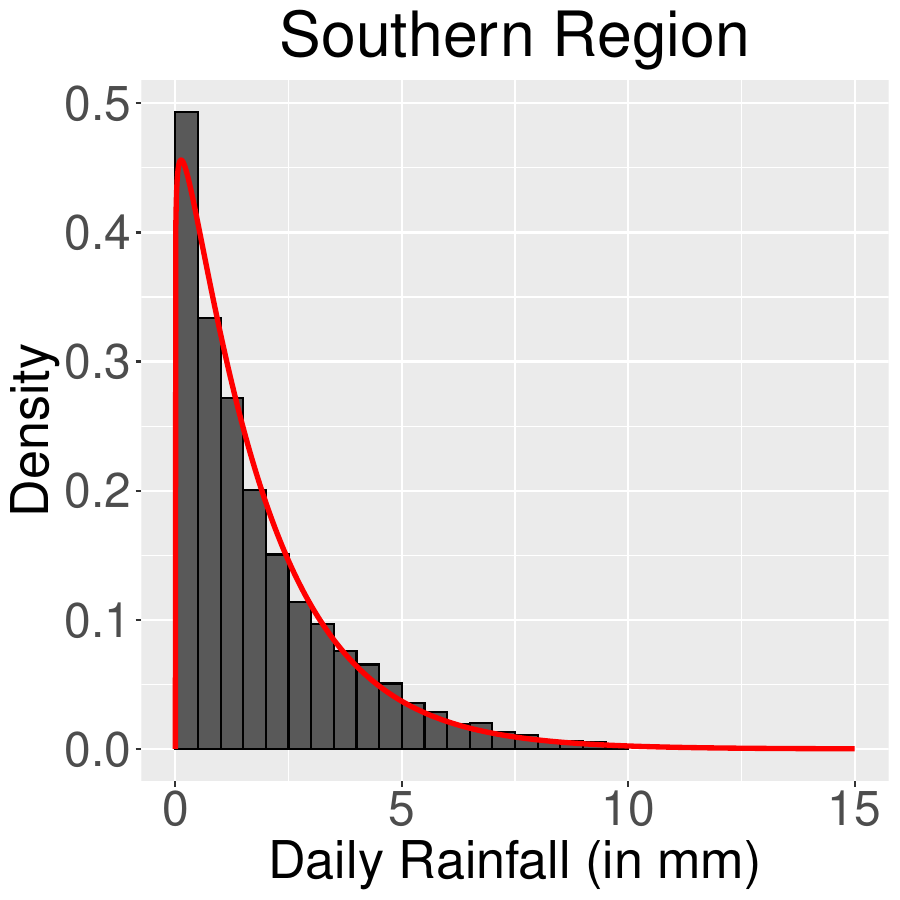} 
    \caption{Bar diagrams of the annual average wet-day rainfall during June through September along with fitted mean curves based on twelve cubic B-splines with equidistant knots, for Northern, Middle, and Southern Western Ghats regions (top panels). Histograms of the detrended residuals from the daily rainfall overlapped with the fitted GE densities (bottom panels).}
    \label{fig:Exploratory_est_mean_hist dens}
\end{figure}

Subsequently, after removing outliers via the popular adjusted-boxplot method developed by \cite{hubert2008adjusted}, we present two critical visualizations in the bottom panels of Figure \ref{fig:Exploratory_est_mean_hist dens}, where the panels correspond to three regions of interest. We first showcase histograms illustrating the distribution of the detrended residuals obtained by exponentiating the residuals obtained by fitting a semiparametric regression curve to the log-transformed wet-day rainfall observations, which aligns with the standard link function formulations for generalized additive models. Additionally, a red line denotes the fitted density of the GE distribution, with parameters estimated from the detrended residuals. We observe a strong alignment between the estimated density and the associated histograms, indicating a favorable fit. This visual representation significantly supports the rationale behind the semiparametric GE regression model proposed in this paper. Additionally, the plots highlight a marked resemblance of the GE distribution towards its foundational model, the exponential distribution, which has a density with mode at zero and then decays exponentially. This observation also reinforces our second consideration of using a novel distance-based prior for the shape parameter of the GE distribution.


\subsection{Model Specification}
\label{subsec:Model_descp}

In our case, except for the rainfall measures and the corresponding year information, we do not have access to additional covariates. Hence, we consider the year $T$ as a covariate in our analysis and assume it is defined over the continuous interval $[1901,2022]$, i.e. if $Y_t$ denotes a rainfall measure for year $t$, our model is given by $Y_t | T = t \overset{\textrm{Indep}}{\sim} \textrm{GE}(\alpha, \lambda(t)), ~~t=1901,\ldots,2022$. We conduct the analysis using two distinct models based on two different choices of $\lambda(t)$. The first model employs a parametric approach to model the rate parameter, while the second model is our proposed semiparametric formulation. We employ a simple linear regression model for the rate parameter in the parametric setting, given as $\lambda_{(\textrm{L})}(t)$ in \eqref{eq:data_app_two models}. On the other hand, for the semiparametric regression, we adopt the basis spline regression form presented in \eqref{eq:Basis splines} and choose $\lambda_{(\textrm{NL})}(t)$ in \eqref{eq:data_app_two models} as
\begin{equation}
\label{eq:data_app_two models}
\lambda_{\textrm{L}}(t)= \exp(\beta_0 + \beta_1  t), ~~~ \lambda_{\textrm{NL}}(t) = \exp\left[ \sum_{k = 1}^{K} \beta_k B_k(t) \right].
\end{equation}

We employ Bayesian methods to estimate the model parameters. As the exploratory analysis shows a strong resemblance of the considered GE distribution towards its base model, i.e., the exponential distribution, we use the proposed PC prior for the shape parameter $\alpha$ and choose independent weakly-informative Gaussian priors with mean zero and variance 100 for the regression parameters. As discussed in Section \ref{sec:Inference}, we use MCMC techniques to draw inferences about the model parameters. 

For our analysis, we use $K=12$ cubic B-spline basis functions with equidistant knots. With data available across 122 years, adopting 12 splines enables us to effectively capture decadal patterns using each spline \citep{hazra2020multivariate}. Here, the hyperparameter $\alpha_0$ for the PC prior being a tuning parameter, we compute WAIC values across a range of $\alpha_0$ values, spanning from 0.5 to 5 with a 0.5 increment, and achieve the most precise fit for our semiparametric regression. After individually examining the northern, middle, and southern regions, we identify the optimal values for $\alpha_0$ that yield the lowest WAIC values, and the optimal choices for these regions are $\alpha_0 = 4.5$, $\alpha_0 = 3.5$, and $\alpha_0 = 1.5$, respectively. As a result, these optimal $\alpha_0$ values are employed for their respective regions during the final model fitting stage.

For all three regions, we fit the two competing models in \eqref{eq:data_app_two models}. For each of the model fits, we generate 50,000 MCMC samples for each model parameter. The initial 15,000 samples are removed as burn-in and subsequently excluded from the analysis. Additionally, we employ a thinning interval of 5, and finally, we have 7,000 posterior samples for drawing inference. To evaluate the convergence and mixing of the chains derived from the MCMC process, we visualize the trace plots of the parameters associated with both model fits in the supplementary materials. Specifically, we present the trace plots of the shape parameter $\alpha$ for each region. The regression parameters also exhibit similar satisfactory mixing and convergence.


\subsection{Model Comparison}
\label{subsec:Model comp}

In this section, we compare the results based on the two models mentioned in \eqref{eq:data_app_two models}, and we present the estimated mean daily rainfall on wet days for each year between 1901 and 2022 for the Northern, Middle, and Southern Western Ghats regions and two competing models in Figure \ref{fig:model_fit}. Across all three regions, a noticeable trend emerges: the semiparametric models exhibit a notably superior fit. This distinction becomes evident as we observe multiple abrupt fluctuations in the bars representing the annual averages of wet-day precipitation. Remarkably, the semiparametric model effectively captures these fluctuations. Particularly noteworthy is the ability of the semiparametric model to accurately capture the nonstationarity present in the precipitation patterns. This heightened ability to encapsulate the dynamic variations in precipitation is a notable strength of the semiparametric model fitting. We also present the pointwise 95\% credible intervals for the trajectories estimated from the MCMC samples. The credible bands based on the semiparametric model are generally wider than the ones based on the parametric GE regression model; semiparametric models provide robust estimates in general but have higher uncertainty due to a bias-variance trade-off, and we observe the same here as well.

\begin{figure}[t]
    \centering
    \includegraphics[width=0.8\textwidth]{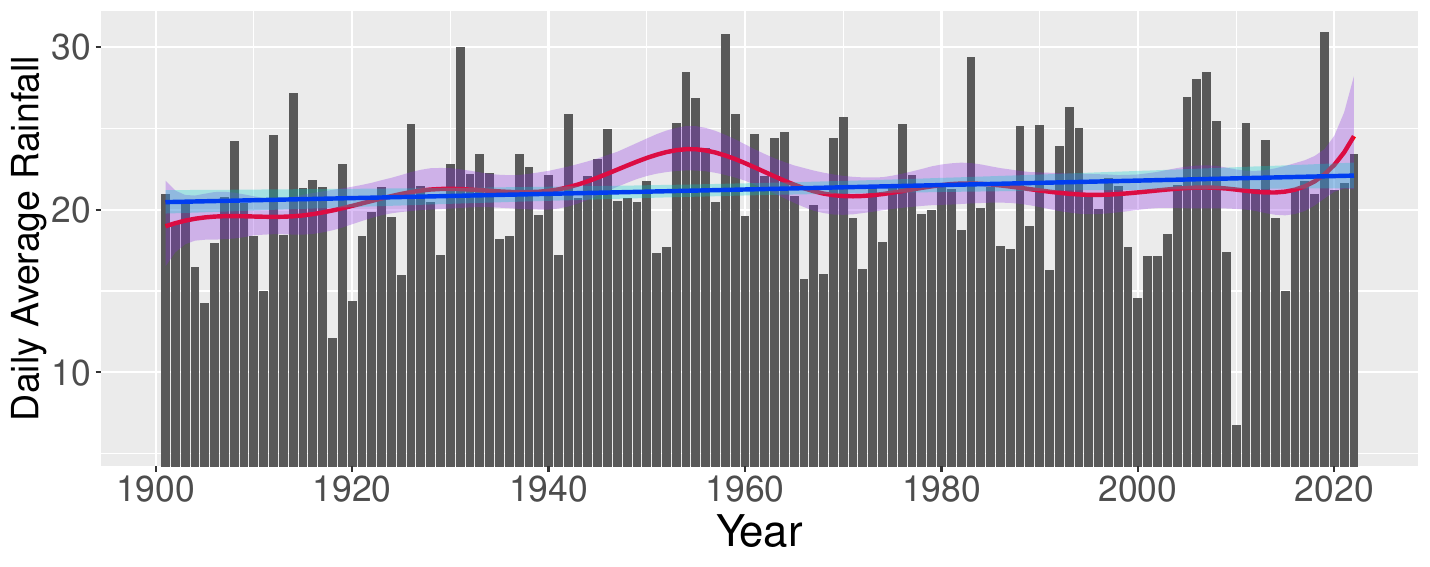}
    \includegraphics[width=0.8\textwidth]{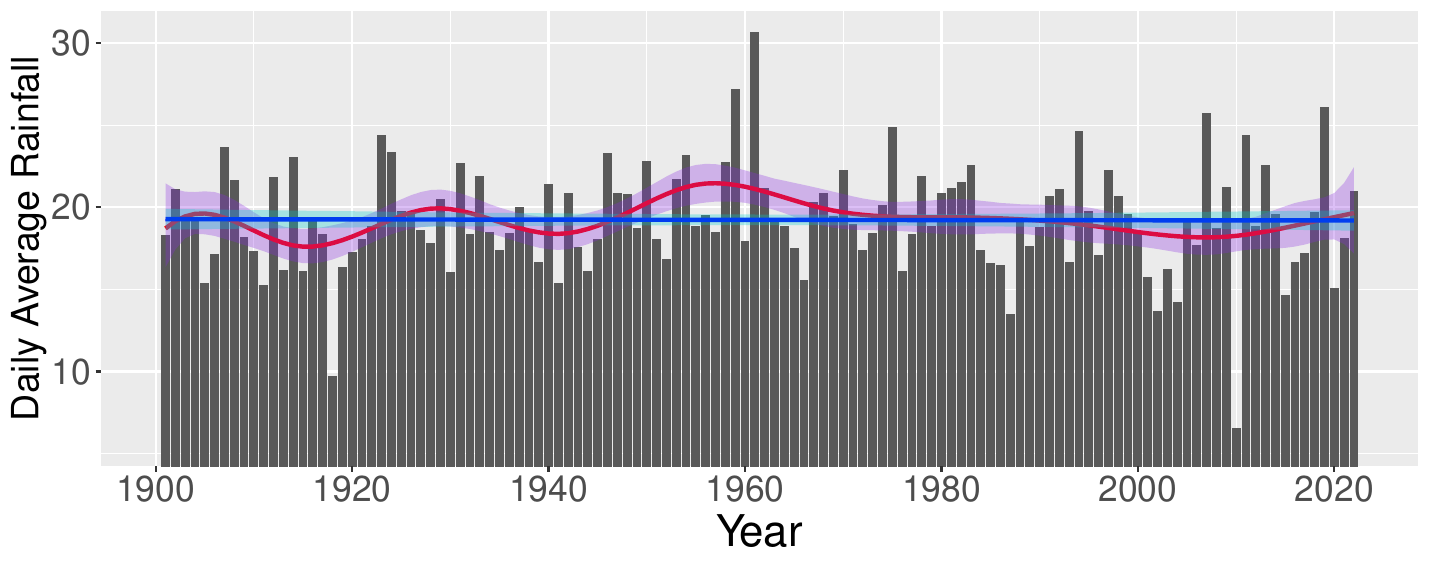}
    \includegraphics[width=0.8\textwidth]{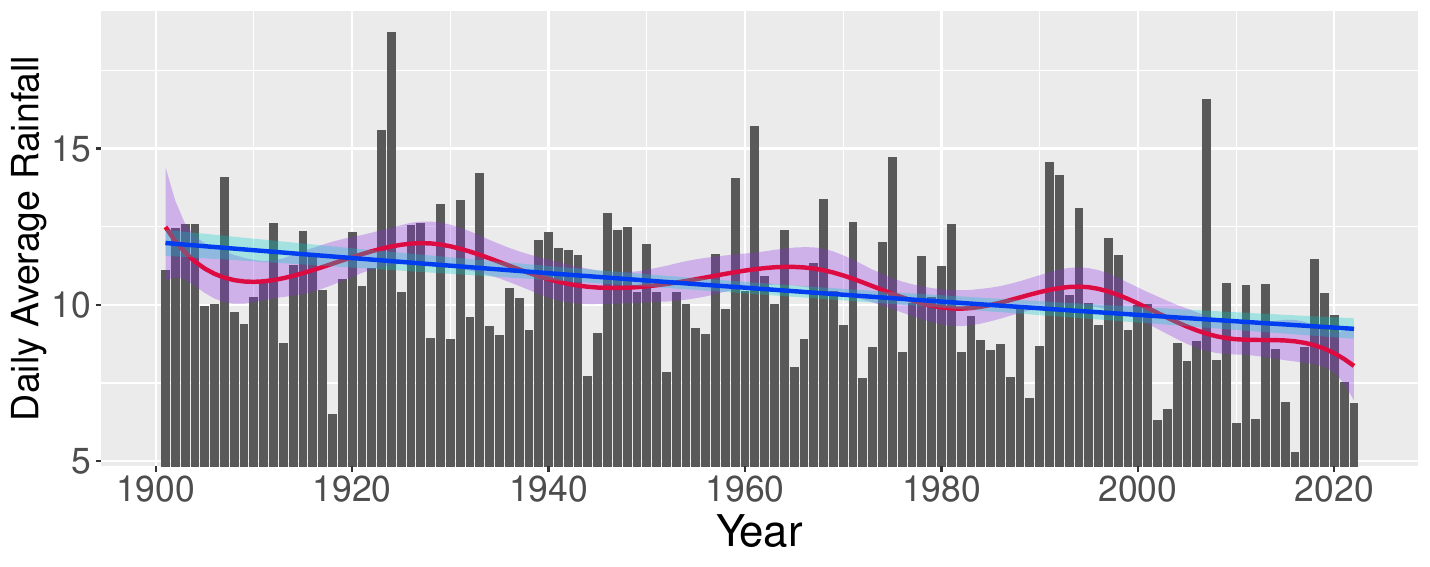}
    \caption{Estimated mean of daily wet-day rainfall (in mm) with semiparametric (red line) and parametric (blue line) models in \eqref{eq:data_app_two models}, along with corresponding pointwise 95\% credible intervals (ribbons). The top, middle, and bottom panels show the results for the Northern, Middle, and Southern Western Ghats regions.} 
    \vspace{-0.1cm}
    \label{fig:model_fit}
\end{figure}


\subsection{Inferences about Western Ghats Rainfall} 

In this subsection, we discuss results based on fitting the proposed GE regression model with a PC prior for the GE shape parameter. The estimated GE shape parameters (posterior means) for the Northern, Middle, and Southern Western Ghats regions are 0.859, 0.949, and 0.873, respectively, and the corresponding posterior standard deviations are 0.096, 0.100, and 0.097, respectively. These shape parameter values indicate a pronounced alignment with the exponential distribution of wet-day rainfall. Consistent with the fluctuating pattern in the annual average of daily wet-day rainfall, the fitted mean lines for each region also demonstrate short-term fluctuations. Besides, a consistent and stable mean rainfall trend is noticeable across the Northern and Middle Western Ghats regions. However, in the Southern Western Ghats region, the fitted parametric and semiparametric models distinctly reveal a decaying pattern in the annual averages of daily wet-day rainfall.

We present two significant insights into the rainfall patterns within these regions: the overarching decade-long shifts and individual region-specific probability rainfall plots. The calculation of the decadal change involves determining the overall rainfall shift and dividing it by the number of decades, resulting in $\{\mu(2022) - \mu(1901)\}/12.1$, where $\mu(t) = \lambda(t)^{-1} \left[ \psi(\alpha + 1) - \psi(1)\right]$, and $t$ representing the corresponding year; here, $\psi(\cdot)$ denotes the digamma function. Subsequently, the estimated decadal shifts in rainfall amount are 0.458 mm, 0.078 mm, and -0.367 mm for the northern, middle, and southern regions, respectively. In Figure \ref{fig:prob_rainfall}, we display the probability rainfall graphs for three distinct probabilities: 0.3 (red line), 0.5 (blue line), and 0.7 (green line); in agrometeorology, $100p$\% probability rainfall means the $(1-p)^{th}$ quantile of the probability distribution of rainfall. Figure \ref{fig:prob_rainfall} showcases the estimated probability-rainfall graphs, along with pointwise 95\% credible intervals for the estimated rainfall. We derive these intervals from the MCMC samples; they illustrate the uncertainty associated with the estimation process. For 70\% probability-rainfall, the pointwise credible bands are wider than the other two probability levels.

\begin{figure}[ht]
    \centering
    \includegraphics[width=0.32\textwidth]{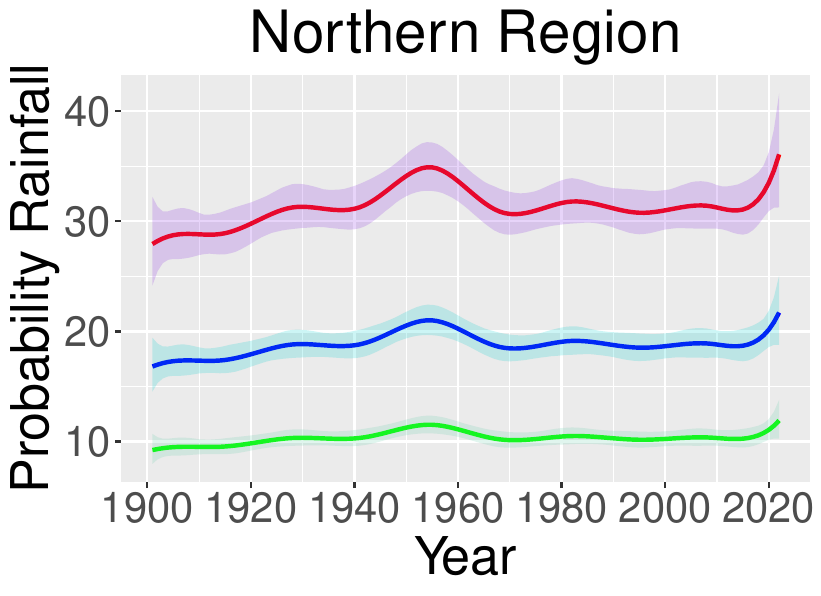}
    \includegraphics[width=0.32\textwidth]{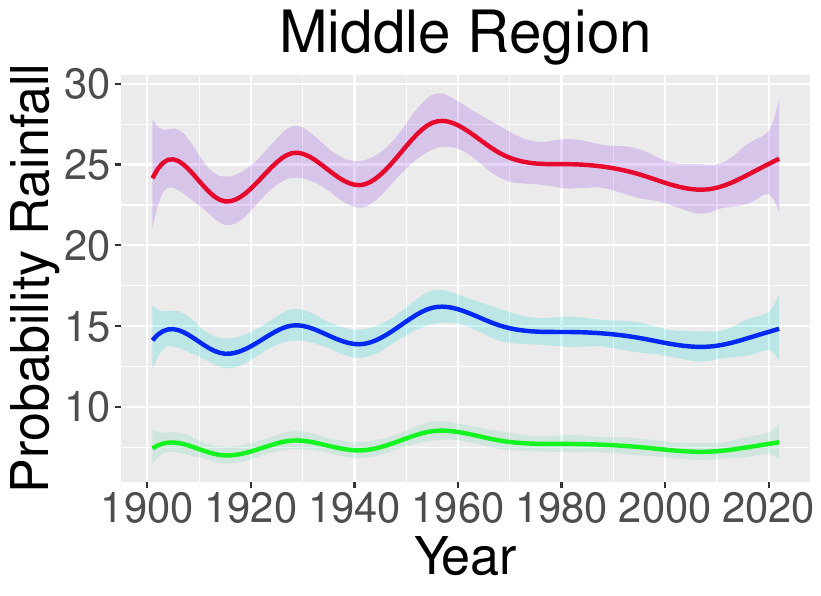}
    \includegraphics[width=0.32\textwidth]{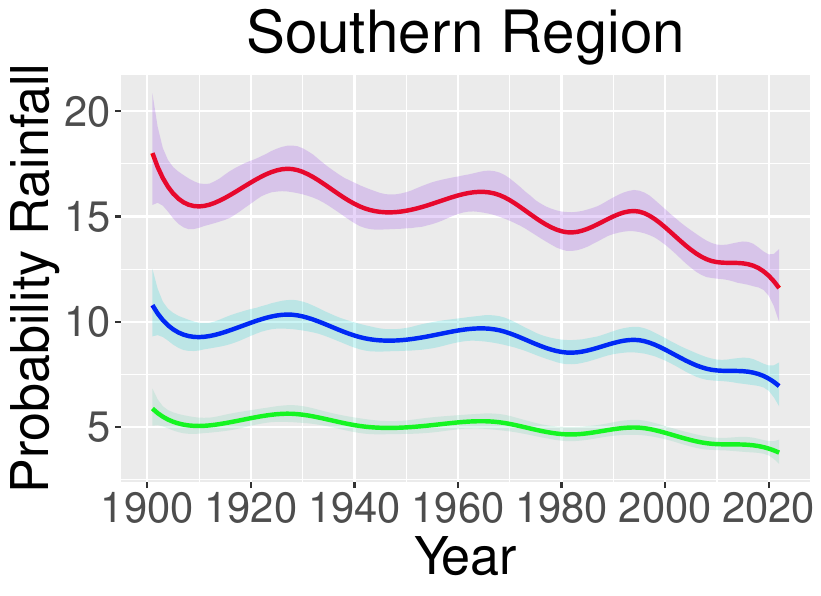}
    \caption{The 30\% (red line), 50\% (blue line), and 70\% (green line) probability rainfall (in mm) with corresponding pointwise 95\% credible intervals (ribbons).}
    \vspace{-0.1cm} 
    \label{fig:prob_rainfall}
\end{figure}

As a crucial component of our comprehensive analysis, we further discuss the dynamic nature of annual average rainfall for the Western Ghats range over the past century by exploring the plot of its rate of change. Interpreting this quantity unveils insights into trends, variations, and shifts in mean values over time, offering glimpses into rainfall behavior. A higher absolute magnitude implies fast changes, while a lower one indicates gradual shifts. A positive or negative rate means an increasing or decreasing mean rainfall over time, potentially signaling rising or decreasing annual average rainfall. We compute this quantity by taking the derivative of the fitted mean from our semiparametric model with respect to the time component from $\lambda_{\textrm{NL}}(t)$ in \eqref{eq:data_app_two models}, given by
\begin{equation}
\label{eq:change_of_rate}
    \frac{\partial \mu(t)}{\partial t} = -\frac{\psi(\alpha + 1) - \psi(1)}{[\lambda(t)]^2} \sum_{k = 1}^{K} \beta_k \frac{\partial B_k(t)}{\partial t},
\end{equation}
where we compute the derivatives of the cubic B-splines using \texttt{fda} package \citep{fdainR} in \texttt{R}.  

Figure \ref{fig:rate_of_change} illustrates varying trends in the rate of change in mean annual rainfall over the years across the three regions of the Western Ghats range. Initially, the Northern and Middle regions exhibit more pronounced fluctuations in the rate-of-change graphs than the Southern region. This pattern suggests more rapid variations in rainfall trends in the Northern and Middle regions, while a more stable rainfall pattern is visible for the Southern area. Moreover, the small-scale positive and negative rate-of-change instances are well-balanced for the Northern and Middle regions. This pattern implies that over the past century, changes in rainfall have been relatively symmetric in terms of increase and decrease, with no significant alterations in long-term patterns. In contrast, the Southern region displays a substantial portion of years with graphs below the zero line, signifying a prevalent decreasing trend in rainfall. The rate of change in mean for the last 30 years shows consistent negative values in the Southern sector, indicating the declining rainfall trend, while the graphs for the other two regions consistently exhibit positive values, indicating an increasing trend in rainfall over the past three decades in those areas. The pointwise 95\% credible intervals for the last 30 years include the zero line for the Northern and Middle regions; hence, the positive values for the last years are not significant. On the other hand, while the posterior mean rate-of-change remains negative for the Southern region in general, the credible intervals indicate that the negative values of rate-of-change are significant for several timestamps; however, the positive values are generally insignificant.

\begin{figure}[h]
    \centering
    \includegraphics[width=0.32\textwidth]{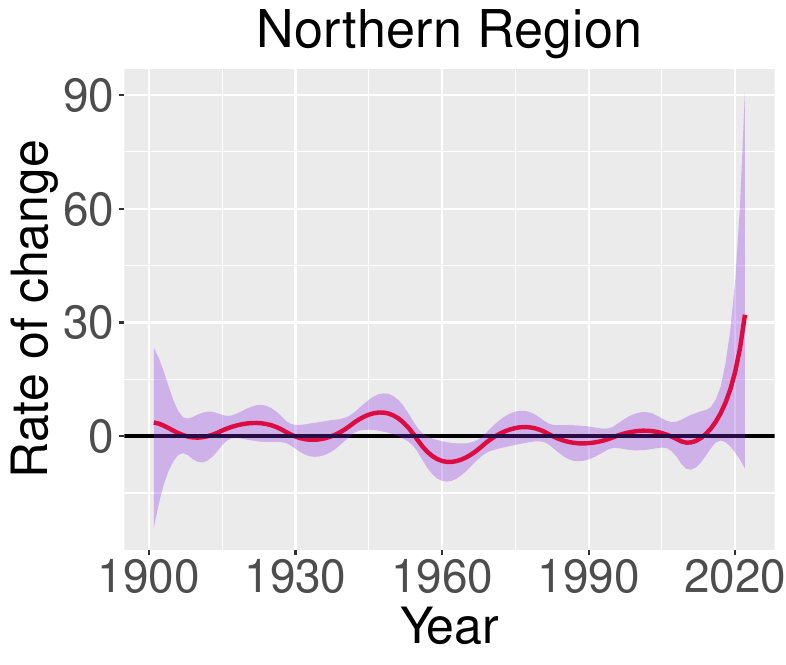}
    \includegraphics[width=0.32\textwidth]{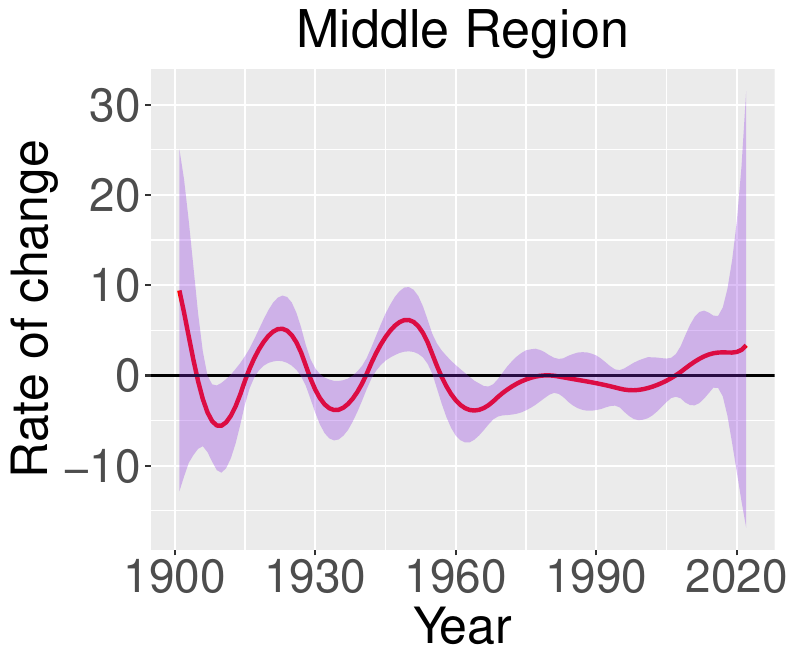}
    \includegraphics[width=0.32\textwidth]{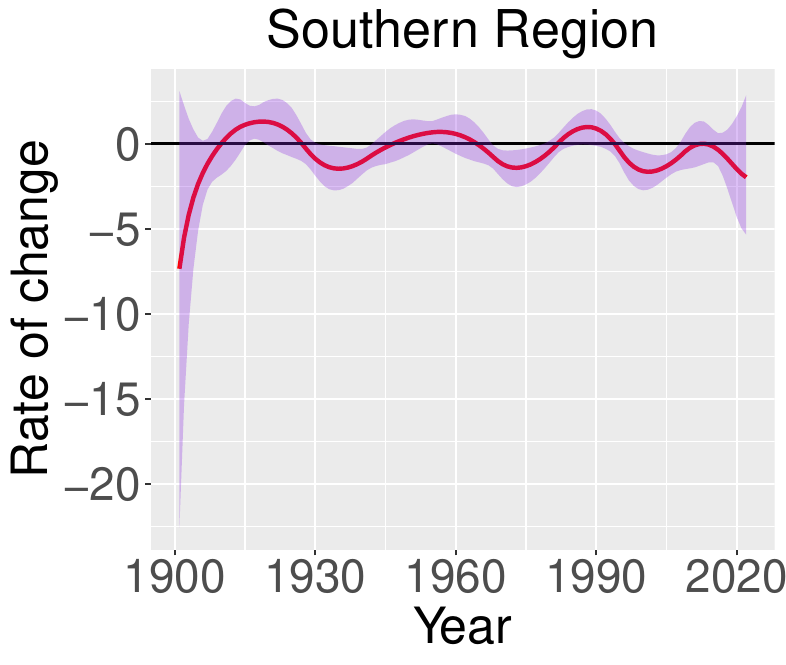}
    \caption{Rate of change in the annual average of daily wet-day rainfall in the monsoon months across the year 1901--2022, given by $\frac{\partial \mu(t)}{\partial t}$ in \eqref{eq:change_of_rate} (red line), and the corresponding pointwise 95\% credible intervals (ribbons). The black line represents the zero value.}
    \vspace{-0.1cm}
    \label{fig:rate_of_change}
\end{figure}


\subsection{Practical Significance of the Results} 

While the preceding subsections demonstrated the statistical superiority of the proposed semiparametric GE regression model and introduced some key quantities we explored, it is equally important to highlight the practical significance of the findings in the context of real-world rainfall applications. The ability of our model to flexibly capture both long-term trends and short-term fluctuations in rainfall has several direct implications.

First, detecting a significant declining trend in wet-day rainfall for the Southern Western Ghats has immediate consequences for regional agriculture and water resource management. Reduced rainfall in this region directly threatens crop yields, reservoir storage, and groundwater recharge, underscoring the urgency of developing adaptive strategies for irrigation planning and sustainable water use. In contrast, the relative stability of rainfall patterns in the Northern and Middle regions suggests that these areas may be more resilient to immediate climatic shifts, but continuous monitoring remains essential.

Second, by accurately modeling decadal fluctuations and nonstationary rainfall patterns, the semiparametric GE regression provides a more realistic representation of rainfall variability than traditional parametric alternatives. Such information is highly valuable for policymakers and planners who must allocate resources under uncertainty. For instance, reservoir release schedules, drought preparedness, and flood control strategies can all benefit from models that capture abrupt shifts in rainfall intensity over shorter horizons.

Lastly, the probability rainfall curves (e.g., 30\%, 50\%, and 70\% levels) derived from our Bayesian framework offer useful insights for agrometeorological planning. For example, in regions where the rainfall amount has been decreasing, such as the Southern Western Ghats, farmers may consider adjusting their cropping strategies — for instance, choosing varieties with longer harvesting periods so that the crop receives sufficient rainfall during its growth.


\section{Discussions and Conclusions}
\label{sec:Conclusion}

With its shape and rate parameters, the generalized exponential (GE) distribution facilitates more rigorous skewness attributes than several other distributions, specifically the exponential distribution. Thus, it is a better choice as a potential flexible model to incorporate high positive skewness in the data. Additionally, by varying the shape parameter, the hazard function of the GE distribution can adapt flexibly, making it more suitable for modeling complex data structures. In the regression arena, semiparametric regression is a powerful statistical method that combines the flexibility of nonparametric models with the interpretability and efficiency of parametric models. The superiority of our proposed model in capturing nonlinearity compared to the corresponding parametric model is depicted in Sections \ref{sec:Simulation} and \ref{sec:Data_Application}. On the other hand, penalized complexity (PC) prior is a principled distance-based prior that penalizes departure from a base model. It is used for specifying priors on parameters that are difficult to elicit directly from expert knowledge. This paper introduces a PC prior for the GE shape parameter, with the motivation of driving the GE distribution closer to the characteristics of the exponential distribution, a well-known probability distribution model for classical rainfall modeling.

The proposed semiparametric GE regression model reasonably fits the wet-day rainfall data for the Northern, Middle, and Southern regions of the Western Ghats mountain range of India. We observe a consistent overall trend with periodic fluctuations in the Northern and Middle Western Ghats regions. However, a declining trend is prominent in the Southern Western Ghats region. This observation is further supported by the decadal analysis of rainfall changes in these three regions, where only the Southern region exhibits a clear and significant negative value, indicating the effects of climate change. This research enhances our comprehension of the intricate climatic dynamics within the Western Ghats and emphasizes the critical role of precise predictive models in anticipating seasonal rainfall variations.

Alternative distributions, such as the gamma distribution, are also commonly used in rainfall modeling and remain effective in many applications. While gamma regression models are well established and often effective, our choice of the GE distribution offers some specific benefits. In particular, unlike the gamma distribution, the GE has a closed-form distribution function, which can simplify computations. Similar to the GE distribution, the natural base model for the gamma distribution is also the exponential distribution, once the shape parameter is set to one. However, the Kullback-Leibler divergence would involve gamma and digamma functions, and the final PC prior for the gamma shape parameter would also involve the trigamma function, making it difficult to study its theoretical properties explicitly. On the other hand, the PC prior for our GE case does not involve any special function, making it easier to note a Laplace distribution-type shape around the peak. Moreover, while gamma models perform well in standard settings, they may struggle with extreme rainfall events or complex hazard structures \citep{papalexiou2013extreme, hasan2020using}. The GE distribution, in contrast, provides additional flexibility in tail behavior and hazard shapes.

There are several directions for extending this research. In addition to modeling the rate parameter, we can consider treating the shape parameter as a time-dependent variable. Instead of utilizing splines for the rate parameter, an alternative approach could involve employing a Gaussian process prior. Moreover, to ensure the applicability of our comparisons to large datasets, we may explore various approximation techniques like Gaussian Markov random fields \citep{rue2005gaussian}. While this paper has primarily focused on the temporal analysis of rainfall data, further enhancements can be made by incorporating spatial components \citep{gotway1997generalized}. This extension involves investigating the variability in rainfall patterns across diverse geographical regions or watersheds \citep{yang2005spatial}. Additionally, there is potential for developing a real-time rainfall prediction system, offering timely information for tasks such as flood forecasting, reservoir management, and emergency response, based on the foundation provided by this model. For the high-dimensional spatial problems, our model can be implemented as a two-stage model where the GE parameters can be estimated at each spatial location, ignoring the spatial structure, and those estimates can be smoothed using a Gaussian process \citep{hazra2023bayesian}. 


\section*{Supplementary material}

Supplementary materials include the trace plots of the MCMC chains of the shape parameter of the generalized exponential regression model, under the proposed semiparametric model and the competing parametric model. We further provide a map identifying the Northern, Middle, and Southern Western Ghats regions on the map of India. Subsequently, we provide some details about selecting the hyperparameter of the proposed penalized complexity prior for the shape parameter of the generalized exponential regression model. The GitHub link for codes (written in \texttt{R}) for obtaining MCMC samples from the posterior distribution of the parameters of the proposed semiparametric GE regression model and the processed dataset analyzed in this paper is also provided.


\section*{Disclosure statement}

No potential conflict of interest was reported by the authors.


\begin{appendices}

\section{Kullback-Liebler Divergence between generalized exponential and exponential distributions}
\label{apndx:1}

We compute the Kullback-Liebler Divergence $\text{KLD}(\alpha) = \text{KLD}(f \parallel g)$, with $f$ being the generalized exponential density function in \eqref{eq:pdf_GE} and $g(y) = \lambda \exp[-\lambda y], y>0$. Thus, $\text{KLD}(\alpha)$ is obtained as
\begin{flalign*}
    \textrm{KLD}(\alpha) & = \int_{0}^{\infty} \log\left( \frac{f(y)}{g(y)} \right)  f(y) \, dy  \\
    & = \log(\alpha) \int_{0}^{\infty} f(y) \, dy + (\alpha - 1) \int_{0}^{\infty} \log\left[ 1 - \exp(-\lambda y) \right]  f(y) \, dy \\
    & = \log(\alpha) + (\alpha - 1) \int_{0}^{\infty} \log\left( 1 - \exp[-\lambda y] \right)  \alpha \lambda \left[ 1 - \exp(-\lambda y) \right]^{\alpha - 1}  \exp(-\lambda y) \, dy \\
    & = \log(\alpha) - (\alpha - 1) \int_{0}^{\infty} \alpha x \cdot \exp[-x(\alpha - 1)]  \exp(-x) \, dx \hspace{5pt} \left[\textrm{replacing}~ \exp(-x) =  1 - \exp(-\lambda y) \right]  \\
    & =  \log(\alpha) + \frac{(1 - \alpha)}{\alpha}.&&
\end{flalign*}


\section{Derivation of the necessary quantities for Bayesian asymptotics.}
\label{apndx:2}

\underline{Proof of \textbf{Theorem 5.1}}: \\

\noindent Doob's theorem \citep{doob1949application} explains the asymptotic behavior of the posterior density. According to this theorem, assuming the sampling model $P_{\bm{\theta}} $ with $ \bm{\theta} \in \bm{\Theta}$ to be identifiable in the sense that $\bm{\theta} \neq \bm{\theta}'$ implies $P_{\bm{\theta}} \neq P_{\bm{\theta}'}$, there exists $\bm{\Theta}_\ast \subseteq \bm{\Theta}$ with $\Pi(\bm{\Theta}_\ast) = 1$ such that for each $\bm{\theta}_\ast \in \bm{\Theta}_\ast$, if $\mathcal{Y}_n = \{Y_1, \dots, Y_n \}$ are iid $P_{\bm{\theta}_\ast}$, then for all $\epsilon > 0$, we have $$\lim_{n \to \infty} \mathbb{P} \left( \bm{\theta} \in \mathcal{N}_\epsilon \left(\bm{\theta}_\ast\right) | \mathcal{Y}_n\right) = 1,$$ where $\mathcal{N}_\epsilon(a) = \{ \bm{\theta} \in \bm{\Theta} : d(\bm{\theta},a) < \epsilon \}$, $d$ being a metric on $\bm{\Theta}$. The theorem implies that as long as the set of possible parameter values under consideration has a positive probability assigned to it by the prior distribution, the posterior distribution obtained after incorporating new information will tend to concentrate around the true parameter value.

The extension of Doob's theorem under a regression setting \cite{ghosal2017fundamentals}, where the covariates (splines in our case) are deterministic, the likelihood is identifiable, and the prior assigns positive weight throughout the parameter space, ensures the concentration of the posterior near the true parameter values. Given that our proposed GE regression model has finite deterministic covariates (splines do not scale to infinity), the likelihood function is clearly identifiable, the proposed PC prior for the shape parameter assigns positive density to all possible shape values, and the weakly-informative Gaussian priors for the regression coefficients assign positive density to all possible parameter values, all necessary conditions for the Doob's theorem under a regression setting holds and the posterior consistency is confirmed. \\

\noindent \underline{Derivation of $\widetilde{\mathcal{I}}(\alpha, \bm{\beta})$ in \textbf{Theorem 5.2}}: \\

\noindent Following the notations in Section \ref{subsec:theo_regression}, for $i=1,\ldots, n$, we have $Y_i | \bm{X}_i = \bm{x}_i \overset{\textrm{Indep}}{\sim} \textrm{GE}\big(\alpha, \lambda(\bm{x}_i)\big)$ with $\log[\lambda(\bm{x}_i)] = \bm{B}(\bm{x}_i)' \bm{\beta}$, where $\bm{B}(\bm{x}_i) = (b_{i1}, \ldots, b_{iK})'$ and $\bm{\beta} = (\beta_1, \ldots, \beta_K)'$. Hence, the likelihood function in \eqref{eq:GE_lklhd_reg} can be rewritten as $L(\alpha, \bm{\beta} | \mathcal{Y}_n) = \prod_{i = 1}^{n} f\big(Y_i; \alpha, \exp[\bm{B}(\bm{x}_i)' \bm{\beta}] \big)$. Hence, the explicit form of the log-likelihood is
$$l(\alpha, \bm{\beta} | \mathcal{Y}_n) = n \log(\alpha) + \sum_{i = 1}^{n} S_i + (\alpha - 1) \sum_{i = 1}^{n} \log(1 - E_i) - \sum_{i = 1}^{n} Y_i \exp(S_i),$$
with $S_i = \bm{B}(\bm{x}_i)' \bm{\beta}$ and $E_i = \exp[- Y_i \exp(S_i)]$. Notably, $\exp(S_i) = \lambda(x_i) = \lambda_i$ (say). Then, the first derivatives of the log-likelihood with respect to the parameters are computed as follows.
\begin{eqnarray}
\nonumber  \frac{\partial l(\alpha, \bm{\beta} | \mathcal{Y}_n)}{\partial \alpha} &=& \frac{n}{\alpha} + \sum_{i = 1}^{N} \log (1 - E_i), \\
\nonumber \frac{\partial l(\alpha, \bm{\beta} | \mathcal{Y}_n)}{\partial \beta_k} &=& \sum_{i=1}^{n} b_{ik} + (\alpha - 1)  \sum_{i = 1}^{n} \left[ \frac{ Y_i \exp(S_i) b_{ik} }{ E_i^{-1} - 1} \right] - \sum_{i=1}^{n} Y_i \exp(S_i) b_{ik}  ,~~k=1,\ldots,K.
\end{eqnarray}
Further, we compute the second derivatives as follows. For all $k,k'=1,\ldots,K$,
\begin{eqnarray}
\nonumber J_{\alpha, \alpha}^{(n)} &=& \frac{\partial^2 l(\alpha, \bm{\beta} | \mathcal{Y}_n)}{\partial \alpha^2} = -\frac{n}{\alpha^2}, \\
\nonumber J_{\alpha, k}^{(n)} &=& \frac{\partial^2 l(\alpha, \bm{\beta} | \mathcal{Y}_n)}{\partial \alpha \partial \beta_k } = \sum_{i = 1}^{n} \left[ \frac{ Y_i \exp(S_i) b_{ik} }{ E_i^{-1} - 1} \right],~~k=1,\ldots,K, \\
\nonumber J_{k, k'}^{(n)} &=& \frac{\partial^2 l(\alpha, \bm{\beta} | \mathcal{Y}_n)}{\partial \beta_{k'} \partial \beta_k} = (\alpha - 1) \sum_{i = 1}^{n} \Bigg[ Y_i b_{ik} \cdot \frac{\partial}{\partial \beta_{k'}} \bigg\{ \frac{\exp(S_i) }{E_i^{-1} - 1} \bigg\} \Bigg] - \sum_{i=1}^{n} \big[ Y_i b_{ik} b_{ik'} \exp(S_i) \big] , \\
\nonumber &=& (\alpha - 1) \sum_{i = 1}^{n} \Bigg[ Y_i b_{ik} \cdot \frac{\exp(S_i) b_{ik'}}{(E_i^{-1} - 1)^2} \cdot \left[ E_i^{-1}  - E_i^{-1} Y_i \exp(S_i) - 1 \right] \Bigg] - \sum_{i=1}^{n} \big[ Y_i b_{ik} b_{ik'} \exp(S_i) \big].
\end{eqnarray}
Hence, the elements of the information matrix are given by
\begin{eqnarray}
\nonumber I_{\alpha, \alpha}^{(n)} &=& -\mathbb{E}(J_{\alpha, \alpha}^{(n)}) = \frac{n}{\alpha^2}, \\
\nonumber  I_{\alpha, k}^{(n)} &=& -\mathbb{E}(J_{\alpha, k}^{(n)}) = -\sum_{i = 1}^{n} \exp(S_i) b_{ik} \underbrace{\mathbb{E}\left( \frac{Y_i}{1 - E_i^{-1}} \right)}_{\text{(I)}},~~k=1,\ldots,K, \\
\nonumber I_{k, k'}^{(n)} &=& -\mathbb{E}(J_{k, k'}^{(n)})  =  \sum_{i=1}^{n} b_{ik} b_{ik'} \exp(S_i) \mathbb{E}\big[ Y_i \big] \\
\nonumber && - (\alpha - 1) \sum_{i = 1}^{n} \exp(S_i) b_{ik'}b_{ik} \mathbb{E}\Bigg[ \frac{Y_i}{(E_i^{-1} - 1)^2} \cdot \left[ E_i^{-1}  - E_i^{-1} Y_i \exp(S_i) - 1 \right]\Bigg]  \\
\nonumber &=& \sum_{i=1}^{n} b_{ik} b_{ik'} \exp(S_i) \cdot \frac{1}{\lambda(x_i)} \big[\psi(\alpha+1) - \psi(1) \big] \\
\nonumber && - (\alpha - 1) \sum_{i = 1}^{n} \exp(S_i) b_{ik'}b_{ik} \Bigg[ \mathbb{E}\Bigg( \frac{Y_iE_i^{-1}}{(E_i^{-1} - 1)^2} \Bigg) - \mathbb{E}\Bigg( \frac{Y_i^2E_i^{-1} \exp(S_i)}{(E_i^{-1} - 1)^2} \Bigg) - \mathbb{E} \Bigg( \frac{Y_i}{(E_i^{-1} - 1)^2} \Bigg)\Bigg] \\
\nonumber &=& \big[\psi(\alpha+1) - \psi(1) \big] \sum_{i=1}^{n} b_{ik} b_{ik'} - (\alpha -1) \sum_{i=1}^{n} \exp(S_i) b_{ik'}b_{ik} \Big[ (\text{II}) + (\text{III}) + (\text{IV}) \Big] .
\end{eqnarray}

We calculate the four expectations (I), (II), (III), and (IV) separately. We use the transformation of variables and the standard formula for $\mathbb{E}[\log(X)]$ and $\mathbb{E}[\log^2(X)]$ where $X \sim $Beta($\alpha, \beta$), given as
$$\mathbb{E}[\log(X)] = \psi(\alpha) - \psi(\alpha+\beta), \text{ and } \mathbb{E}[\log^2(X)] = [\psi(\alpha) - \psi(\alpha+\beta)]^2+\psi^{(1)}(\alpha) - \psi^{(1)}(\alpha+\beta),$$
where $\psi(\cdot)$ and $\psi^{(1)}(\cdot)$ denote the digamma and trigamma functions, respectively. In the following, $B(\cdot, \cdot)$ denotes the beta function.
\vspace{2mm}

\noindent \underline{Expectations (I):}
\begin{align*}
   & \mathbb{E}\left( \frac{Y_i}{1 - E_i^{-1}} \right) = \int_{0}^{\infty} \frac{y}{1-e^{\lambda_iy}} \alpha \lambda_i\big(1-e^{-\lambda_iy}\big)^{(\alpha-1)} e^{-\lambda_iy}~dy
\end{align*}
Let $z = e^{-\lambda_i y}$, then 
$y = -\frac{\log z}{\lambda_i}$, and $\lambda_i e^{-\lambda_iy} dy = -dz.$
Further, when $y=0 \Rightarrow z=1$, and $y\to\infty \Rightarrow z=0$. Hence
\begin{align*}
   \mathbb{E}\left( \frac{Y_i}{1 - E_i^{-1}} \right) &= \alpha \int_{1}^{0} \frac{-\frac{\log z}{\lambda_i}}{1-1/z} (1-z)^{(\alpha-1)} ~ (-dz) = - \frac{\alpha}{\lambda_i} \int_{0}^{1}\log z \cdot z~(1-z)^{(\alpha-2)} ~ dz \\
   & = - \frac{\alpha}{\lambda_i} \cdot B(2,\alpha-1) \cdot [\psi(2)-\psi(\alpha+1)] = - \frac{\psi(2)-\psi(\alpha+1)}{\lambda_i (\alpha-1)}.  
\end{align*}

\noindent \underline{Expectations (II):}
\begin{align*}
    \mathbb{E}\Bigg( \frac{Y_iE_i^{-1}}{(E_i^{-1} - 1)^2} \Bigg)& = \int_{0}^{\infty} \frac{ye^{\lambda_iy}}{\big(e^{\lambda_iy}-1\big)^2} \alpha \lambda_i\big(1-e^{-\lambda_iy}\big)^{(\alpha-1)} e^{-\lambda_iy_i}~dy \\
   &  = \alpha \int_{1}^{0} \frac{-\frac{\log(z)}{\lambda_i} \cdot \frac{1}{z}}{(\frac{1}{z}-1)^2} (1-z)^{\alpha-1} (-dz) = -\frac{\alpha}{\lambda_i} \int_{0}^{1} \log(z) \cdot z~(1-z)^{\alpha-3} ~ dz \\
   & = - \frac{\alpha}{\lambda_i} \cdot B(2,\alpha-2) \cdot [\psi(2)-\psi(\alpha)] = - \frac{\alpha[\psi(2)-\psi(\alpha+1)]}{\lambda_i (\alpha-1)(\alpha-2)}. 
\end{align*}

\noindent \underline{Expectations (III):}
\begin{align*}
    \mathbb{E}\Bigg( \frac{Y_i^2E_i^{-1} \exp(S_i)}{(E_i^{-1} - 1)^2} \Bigg)& = \int_{0}^{\infty} \frac{y^2e^{\lambda_iy}\lambda_i}{\big(e^{\lambda_iy}-1\big)^2} \alpha \lambda_i\big(1-e^{-\lambda_iy}\big)^{(\alpha-1)} e^{-\lambda_iy_i}~dy \\
   &  = \alpha \int_{1}^{0} \frac{\frac{\log^2(z)}{\lambda^2_i} \frac{1}{z} \lambda_i}{(\frac{1}{z}-1)^2} (1-z)^{\alpha-1} (-dz) = -\frac{\alpha}{\lambda_i} \int_{0}^{1} \log^2(z) \cdot z~(1-z)^{\alpha-3} ~ dz \\
   & = - \frac{\alpha}{\lambda_i} \cdot B(2,\alpha-2) \cdot \big[\big(\psi(2)-\psi(\alpha)\big)^2 + \psi^{(1)}(2)-\psi^{(1)}(\alpha) \big] \\
   & = - \frac{\alpha\big[\big(\psi(2)-\psi(\alpha)\big)^2 + \psi^{(1)}(2)-\psi^{(1)}(\alpha) \big]}{\lambda_i (\alpha-1)(\alpha-2)}. 
\end{align*}

\noindent \underline{Expectations (IV):}
\begin{align*}
    \mathbb{E}\Bigg(\frac{Y_i}{(E_i^{-1} - 1)^2}\Bigg)& = \int_{0}^{\infty} \frac{y}{\big(e^{\lambda_iy}-1\big)^2} \alpha \lambda_i\big(1-e^{-\lambda_iy}\big)^{(\alpha-1)} e^{-\lambda_iy_i}~dy \\
   &  = \alpha \int_{1}^{0} \frac{-\frac{\log(z)}{\lambda_i}}{(\frac{1}{z}-1)^2} (1-z)^{\alpha-1} (-dz) = -\frac{\alpha}{\lambda_i} \int_{0}^{1} \log(z) \cdot z^2~(1-z)^{\alpha-3} ~ dz \\
   & = - \frac{\alpha}{\lambda_i} \cdot B(3,\alpha-2) \cdot [\psi(3)-\psi(\alpha+1)] = - \frac{2[\psi(2)-\psi(\alpha+1)]}{\lambda_i (\alpha-1)(\alpha-2)}. 
\end{align*}

Hence, finally, we get
\begin{align*}
    I_{\alpha, k}^{(n)} & = \frac{\psi(2)-\psi(\alpha+1)}{(\alpha-1)} \sum_{i = 1}^{n} b_{ik} ,~~k=1,\ldots,K, \text{ and} \\
    I_{k,k'}^{(n)}&  =\sum_{i=1}^{n} b_{ik}b_{ik'} \Big[\psi(\alpha+1) - \psi(1) + \frac{\alpha[\psi(2)-\psi(\alpha+1)]}{(\alpha-2)} + \\ 
    & \hspace{2cm} \frac{\alpha\big[\big(\psi(2)-\psi(\alpha)\big)^2 + \psi^{(1)}(2)-\psi^{(1)}(\alpha) \big]}{(\alpha-2)} + \frac{2[\psi(2)-\psi(\alpha+1)]}{(\alpha-2)} \Big].
\end{align*}

Further, denoting the matrix $\widetilde{\mathcal{I}}(\alpha, \bm{\beta})$ element-wise as follows \begin{equation}
\nonumber    \widetilde{\mathcal{I}}(\alpha, \bm{\beta}) = \begin{pmatrix}
\lim_{n \rightarrow \infty} n^{-1} I_{\alpha,\alpha}^{(n)} & \lim_{n \rightarrow \infty} n^{-1} I_{\alpha,1}^{(n)} & \cdots & \lim_{n \rightarrow \infty} n^{-1} I_{\alpha,K}^{(n)} \\
\lim_{n \rightarrow \infty} n^{-1} I_{1,\alpha}^{(n)} & \lim_{n \rightarrow \infty} n^{-1} I_{1,1}^{(n)} & \cdots & \lim_{n \rightarrow \infty} n^{-1} I_{1,K}^{(n)} \\
\vdots  & \vdots  & \ddots & \vdots  \\
\lim_{n \rightarrow \infty} n^{-1} I_{K,\alpha}^{(n)} & \lim_{n \rightarrow \infty} n^{-1} I_{K,1}^{(n)} & \cdots & \lim_{n \rightarrow \infty} n^{-1} I_{K,K}^{(n)} 
\end{pmatrix},
\end{equation}
we have 
\begin{eqnarray}
\nonumber \lim_{n \rightarrow \infty} n^{-1} I_{\alpha,\alpha}^{(n)} &=& \frac{1}{\alpha^2}, \\
\nonumber  \lim_{n \rightarrow \infty} n^{-1} I_{\alpha, k}^{(n)} &=& \widetilde{b}_k\frac{\psi(2)-\psi(\alpha+1)}{(\alpha-1)},~~k=1,\ldots,K, \\
\nonumber  \lim_{n \rightarrow \infty} n^{-1}  I_{k, k'}^{(n)} &=& \widetilde{b}_{k, k'} \Big[\psi(\alpha+1) - \psi(1) + \frac{\alpha[\psi(2)-\psi(\alpha+1)]}{(\alpha-2)} + \\ 
\nonumber    && \hspace{2cm} \frac{\alpha\big[\big(\psi(2)-\psi(\alpha)\big)^2 + \psi^{(1)}(2)-\psi^{(1)}(\alpha) \big]}{(\alpha-2)} + \frac{2[\psi(2)-\psi(\alpha+1)]}{(\alpha-2)} \Big].
\end{eqnarray}
where $\widetilde{b}_k = \lim_{n \rightarrow \infty} n^{-1} \sum_{i = 1}^{n} b_{ik}$ and $\widetilde{b}_{k, k'} = \lim_{n \rightarrow \infty} n^{-1} \sum_{i=1}^{n} b_{ik}b_{ik'}$ for $k,k'=1,\ldots, K$, and the existence of these limits holds from the regularity conditions of Theorem 5.2.
\end{appendices}


\bibliography{sn-bibliography}

\end{document}